\newtheorem{theorem}{Theorem}
\newtheorem{lemma}{Lemma}
\newtheorem{corollary}{Corollary}
\newtheorem{proposition}{Proposition}
\theoremstyle{definition}
\newtheorem{definition}{Definition}
\newtheorem{remark}{Remark}
\DeclareMathOperator{\Prob}{\mathbb{P}}
\DeclareMathOperator{\E}{\mathbb{E}}
\DeclareMathOperator{\Et}{\mathbb{E}_t}
\newcommand{\flnw}{\lfloor N w_t^{(i)} \rfloor}
\newcommand{\1}[1]{\mathbbm{1}_{\{#1\}}}
\begin{document}




\title{Simple conditions for convergence of sequential Monte Carlo genealogies with applications}

\author{S. Brown$^{1}$\\s.brown.18@warwick.ac.uk \and P. A. Jenkins$^{1,2,3}$\\p.jenkins@warwick.ac.uk \and A. M. Johansen$^{1,3}$\\a.m.johansen@warwick.ac.uk \and J. Koskela$^{1}$\\j.koskela@warwick.ac.uk} 
{\date{\small $^1$ Department of Statistics, University of Warwick, Coventry, U.K.\\[2pt]
$^2$ Department of Computer Science, University of Warwick, Coventry, U.K.\\[2pt]
$^3$ Alan Turing Institute, London, U.K.}}

\maketitle

\begin{abstract}
We present simple conditions under which the limiting genealogical process associated with a class of interacting particle systems with \emph{non-neutral} selection mechanisms, as the number of particles grows, is a time-rescaled Kingman coalescent. Sequential Monte Carlo algorithms are popular methods for approximating integrals in problems such as non-linear filtering and smoothing which employ this type of particle system. Their performance depends strongly on the properties of the induced genealogical process. We verify the conditions of our main result for standard sequential Monte Carlo algorithms with a broad class of low-variance resampling schemes, as well as for conditional sequential Monte Carlo with multinomial resampling.
\end{abstract}


\section{Introduction}
Interacting particle systems (IPSs) are used as models in many situations in which one wants to study the macroscopic effects of a system defined via a large number of microscopic interactions. Such models have been used to explain phenomena as diverse as tumour growth, behavioural systems, and epidemics. See \cite{lig:2005} for an extensive treatment of such systems in continuous time and \cite{delmoral2004} for a class of discrete time systems of interest in this article.
As particles reproduce and die out, there naturally arises an embedded genealogical history of the system, and this genealogy is of interest both in its own right and as a tool for understanding the evolution of the system. This is particularly true in population genetics, where a broad class of neutral IPS models for an evolving natural population converge, in the infinite population limit and after a suitable time-rescaling, to a well-studied random genealogical process, the so-called \emph{$n$-coalescent} first described by \cite{kingman1982coal}. One might ask: in what other contexts does the $n$-coalescent, or another genealogical process, arise as the scaling limit of a particular IPS? Our goal in this paper is to answer this question within the context of \emph{sequential Monte Carlo} (SMC), a broad class of stochastic algorithms used for computational statistical inference. These algorithms are best known for their application in non-linear filtering and smoothing \cite{gordon1993}, but can be applied generally to mean-field approximation of Feynman-Kac flows; see \cite{delmoral2004} for more background. These methods have found diverse applications throughout signal processing, statistics, econometrics, biology, and many other disciplines, and understanding their properties is of widespread importance; see, for example, \cite{chopin2020,doucet2009,fearnhead18,naesseth2019} for recent surveys.

In SMC, a population of particles evolves in discrete time. The algorithm proceeds by iterating through two steps: a mutation step in which the positions of the particles are updated by applying some Markov kernel; and a selection step in which the particles are weighted by some potential function and resampled to form the next generation. Resampling stochastically duplicates high-weight particles and removes low-weight particles. The duplicates of a particle in the following generation are termed its offspring. A succession of resampling steps induces a genealogy, that is, a process recording the parent-offspring relationships between particles at consecutive generations.

SMC genealogies are important because they capture the phenomenon of \emph{ancestral degeneracy}, which has a substantial impact on the performance of the algorithm.  Due to resampling, the number of distinct ancestors whose descendants comprise the particles at the terminal time decays rapidly as the time horizon increases, a well known consequence of which is that path-based Monte Carlo estimators typically have high variance (see, for example, \cite{briers10,fearnhead18}).
The theoretical analysis of SMC genealogies is a useful tool for addressing a number of practical questions in SMC methods. It provides a framework for answering questions such as: for a fixed number of particles, what is the maximum time horizon over which smoothing estimators are reliable? This is important for tuning the parameters in settings such as particle MCMC \cite{andrieu2010}, fixed-lag smoothing, and stable variance estimation \cite{olsson2019}. It also provides estimates of the memory cost associated with path storage, building on the results of \cite{jacob2015} and \cite[Corollary 2]{koskela2018}.

In this paper we provide simple sufficient conditions under which the genealogy of an SMC algorithm converges to a time-rescaled $n$-coalescent \cite{kingman1982coal} in the large population limit. We require control over only the second and third moments of the marginal family size of each parent. 
This builds upon \cite{koskela2020annals}, a slight error in which was corrected in \cite{koskela2018}.
Ours is a substantial improvement over that work, which requires additional control over fourth moments, including cross-terms, to obtain the same convergence result. That work also imposes a condition on the speed of convergence that is violated for instance by the neutral Moran model, any finite sample of which is known to have a Kingman genealogy in the large population limit (see \cite[p47]{durrett2008}). Our result covers algorithms that the existing results do not and also relies on simple conditions which admit verification for a broad class of algorithms.

The result is known to apply to standard SMC with multinomial resampling \cite[Corollary 1]{koskela2018}. 
We additionally prove convergence for any resampling scheme based on stochastic rounding, described in Definition \ref{defn:stochround}. This includes low-variance schemes such as systematic resampling, residual resampling with stratified residuals, and the more exotic schemes proposed in \cite{crisan1997} and \cite{gerber2017}. The results presented in this paper therefore provide the first a priori characterization of the genealogy of an SMC algorithm for resampling schemes that are widely used in practice, offering quantitative insight into the nature of ancestral degeneracy and into the relative performance of different resampling schemes. We also show that our result applies to \emph{conditional} SMC, a variant that forms a building block of the particle Gibbs algorithm \cite{andrieu2010}, in which it is important to maintain at least two distinct ancestors across a fixed time window. That our results apply to conditional SMC is important because for many practical statistical problems comprising a high-dimensional latent state space model, particle Gibbs is one of the few algorithms that is able to explore parameter posteriors in reasonable time, so understanding the factors affecting its mixing time is of great interest.

\section{Sequential Monte Carlo}
Algorithm \ref{alg:SMC} describes sequential Monte Carlo with $N$ particles over a fixed time window $T$.
The initial proposal distribution is $\mu$, $(K_t)_{t=1}^T$ is a sequence of Markov transition kernels with respective transition densities $(q_t)_{t=1}^T$, and $(g_t)_{t=0}^T$ is a sequence of potential functions.
For simplicity these may be assumed to exist on a common state space that is a subspace of $\mathbb{R}^d$, so $g_0: \mathbb{R}^d \to \mathbb{R}_+$ and, for $t>0$, $g_t: \mathbb{R}^d \times\mathbb{R}^d \to \mathbb{R}_+$ and $q_t: \mathbb{R}^d \times\mathbb{R}^d \to \mathbb{R}_+$,
although the state spaces can in general be any sequence of Polish spaces. We allow potential functions at time $t \geq 1$ to depend upon both the current and previous value of the state to allow our results to directly cover several classes of algorithm which occur in practice, including particle filters in which a proposal other than the transition density are used (see, for example, \cite[Section 4.1]{doucet2009}, \cite[Section 10.3.2]{chopin2020}) and sequential Monte Carlo samplers \cite{delmoral2006}.
At generation $t$, $w_t^{(1:N)} = (w_t^{(1)},\ldots,w_t^{(N)})$ is the vector of particle weights, and $a_t^{(1:N)} = (a_t^{(1)},\ldots,a_t^{(N)})$ is the vector of resampled parental indices.

The output of Algorithm~\ref{alg:SMC} will depend on the particular application, but might include the complete collection of sampled state values, $\{X_t^{(i)}: t\in\{0,\ldots,T\}, i \in\{1,\ldots,N\}\}$ and parental indices $\{a_t^{(i)}:t\in\{0,\ldots,T-1\},N\in\{1,\ldots,N\}\}$. These algorithms are widely used in particular for inference in general state space hidden Markov models where the principal estimation problems are \textit{filtering}, using
$\sum_{i=1}^N w_t^{(i)} \delta_{X_t^{(i)}}$ at time $t$; \textit{smoothing} using
  $\sum_{i=1}^N w_t^{(i)} \delta_{\tilde{X}_{t,0:t}^{(i)}}$
  where the path-particles $\tilde{X}_{t:0:t}^{(i)}$ are vectors in $\mathbb{R}^{d\times(t+1)}$ obtained by setting $\tilde{X}_{t,t}^{(i)} = X_{t}^{(i)}$ and setting $X_{t,s}^{(i)}$ for each $s < t$ to the state associated with the time $s$ ancestor of particle $i$ at time $t$;
  and computation of the \emph{marginal likelihood} via $$\hat{Z}_T := \frac{1}{N} \sum_{i=1}^N g_0(X_0^{(i)}) \prod_{t=1}^T \frac{1}{N} \sum_{i=1}^N g_t(X_{t-1}^{(a_{t-1}^{(i)})}, X_t^{(i)}).$$ Of these, smoothing most explicitly depends upon samples from the full path $X_{0:T}$, but information about the paths may also be retained in the other settings in order to estimate variance \cite{lee2018, olsson2019} and the associated genealogies influence the sampling properties of all such estimators.

The variety of possible procedures for the \textsc{resample} step is an active area of research. Some important examples are explored in Section \ref{sec:applications}.
We  take valid resampling schemes to be those where: the total number of resampled offspring is $N$; the expected number of offspring of particle $i$ conditional on the weights is $N w_t^{(i)}$; and each offspring is assigned equal weight after resampling.

\vspace{10pt}
\begin{algorithm}
\DontPrintSemicolon
\KwIn{$N, T, \mu, (K_t)_{t=1}^T, (g_t)_{t=0}^T$}
\lFor{$i \in \{1,\dots,N\}$}{ 
	Sample $X_0^{(i)} \sim \mu(\cdot)$
}
\lFor{$i \in \{1,\dots, N\}$}{
		$w_{0}^{(i)} \gets  \left\{{\sum_{j=1}^N g_0(X_0^{(j)})}\right\}^{-1}{g_0(X_0^{(i)})} $ 
	}
\For{$t \in \{0,\dots, T-1\}$}{
	Sample $a_t^{(1:N)} \sim $ \textsc{resample}$(\{1,\dots ,N\}, w_t^{(1:N)}$)\;
	\lFor{$i \in \{1,\dots,N\}$}{
		Sample $X_{t+1}^{(i)} \sim K_{t+1}(X_t^{(a_t^{(i)})}, \cdot)$
	}
	\lFor{$i \in \{1,\dots, N\}$}{	
		$w_{t+1}^{(i)} \gets \Big\{ {\sum_{j=1}^Ng_{t+1}(X_t^{(a_t^{(j)})},X_{t+1}^{(j)}) }\Big\}^{-1} g_{t+1}(X_t^{(a_t^{(i)})},X_{t+1}^{(i)}) $
	}
}
\caption{Sequential Monte Carlo}\label{alg:SMC}
\end{algorithm}
\vspace{10pt}

Under the standing assumption stated below, it is sufficient for our purposes to consider the vector $\nu_t^{(1:N)} = (\nu_t^{(1)},\dots,\nu_t^{(N)})$ of offspring counts, where $\nu_t^{(i)} = |\{ j\in \{1,\dots,N\} : a_t^{(j)} =i \}|$.

\textbf{Standing Assumption:} The conditional distribution of parental indices $a_t^{(1:N)}$ given offspring counts $\nu_t^{(1:N)}$ is uniform over all valid assignments.

The standing assumption is a weaker condition than exchangeability \cite[p446]{mohle1998}.
Any resampling scheme can be made to satisfy it by applying an additional permutation of the offspring indices after selecting the parents (see \cite[p. 290]{andrieu2010}).

\section{A limit theorem for sequential Monte Carlo genealogies}
For convenience, we will henceforth measure time backwards, with the terminal particles at time 0 and the initial particles at time $T$. The forward-time process of particles replicating or dying induces a coalescent process when viewed in backwards in time.
The full forward-time process is Markovian, but this no longer holds after integrating out the positions of particles and their weights.
Thus, the reverse-time process of ancestral lineages without position or weight information is not Markovian either.

We will analyse an asymptotic regime in which the total number of particles $N\to\infty$, and consider the finite-dimensional restriction to a sample of $n\leq N$ terminal particles.
The genealogy of such a sample is described by a partition-valued stochastic process $(G_t^{(n,N)})_{t=0}^T$. At time 0 its value is the set of singletons $\{\{1\},\dots,\{n\}\}$. At each time $t$, $i\neq j \in \{1,\dots, n\}$ belong to the same partition block  in $G_t^{(n,N)}$ if and only if terminal particles $i$ and $j$ share a common ancestor at time $t$. We will take $N\to\infty$ and show that the $n$-coalescent \cite{kingman1982coal} is the correct limiting object.
Our limit theorem will apply after a rescaling of time in which the genealogy is viewed over an infinite time horizon; that is, $T\to\infty$.

\begin{definition}
\label{def:kingman}
The \emph{$n$-coalescent} is the homogeneous continuous-time Markov process on the set of partitions of $\{1,\dots,n\}$ with infinitesimal generator $Q$ having entries
\begin{equation*}
(Q)_{\xi,\eta} = \begin{cases}
1 & \xi \prec \eta\\
-|\xi|(|\xi|-1)/2 & \xi=\eta \\
0 & \text{otherwise}
\end{cases}
\end{equation*}
where $\xi$ and $\eta$ are partitions of $\{1,...,n\}$, $|\xi|$ denotes the number of blocks in $\xi$, and $\xi \prec \eta$ means that $\eta$ is obtained from $\xi$ by merging exactly one pair of blocks.
\end{definition}

Throughout the following, falling factorials are denoted by $(a)_b = a(a-1) \cdots (a-b+1)$. We denote by $\mathcal{F}_{t} = \sigma(\nu_s^{(1:N)} : 1 \leq s \leq t)$ the filtration generated by offspring counts, and use the shorthand $\Et[\cdot] \equiv \E[ \cdot \mid \mathcal{F}_{t-1} ]$ for filtered expectations. Since time is labelled in reverse, the filtrations contain information about the future of the original system rather than the past.

A central quantity for analysing convergence of these coalescent processes is the \emph{pair merger rate}, which can be interpreted as the probability that a randomly chosen pair of particles at time $t-1$ have a common ancestor at time $t$. Conditional on $\mathcal{F}_t$, this probability is
\begin{equation}\label{eq:coalescence_rate}
c_N(t) := \frac{1}{(N)_2}\sum_{i=1}^N (\nu_t^{(i)})_2 .
\end{equation}
In the $n$-coalescent the pair merger rate is equal to 1. 
Thus, as $N\to\infty$, the time scaling required to possibly obtain a Kingman coalescent limit is the generalized inverse
\begin{equation*}
\tau_N(t) := \inf \left\{ s\geq 1 : \sum_{r=1}^s c_N(r) \geq t \right\} .
\end{equation*}
Following \cite{mohle1998}, we exclude the case where $\Prob[ \tau_N(t) = \infty ] >0$ for finite $t$. 
This is a very mild assumption which in reality is violated only by pathological (from this perspective) models, for example where minimum-variance resampling is used and the potentials are constant. Settings in which this assumption are violated are unlikely to be encountered in any practical application of SMC. For the classes of algorithms given as examples in Section~\ref{sec:applications} we provide tractable sufficient conditions under which the assumption holds, and these are explicitly verified in Appendix~\ref{app:finiteness}.

In the $n$-coalescent, there are almost surely no mergers involving more than two lineages at a time. 
As shown in \cite[Lemma 1, Case 3]{koskela2018}, an upper bound on the conditional probability that more than two lineages merge at time $t$ in the pre-limiting model is
\begin{equation*}
D_N(t) := \frac{1}{N(N)_2} \sum_{i=1}^N (\nu_t^{(i)})_2 \left\{\nu_t^{(i)} + \frac{1}{N} \sum_{j\neq i} (\nu_t^{(j)})^2 \right\} .
\end{equation*}
This includes both the possibility of three or more lineages merging into one, and of two or more simultaneous pair mergers.
Theorem \ref{thm:generalconv}, our main result, gives a simple sufficient condition controlling these merger rates to yield a Kingman limit for particle genealogies.

\begin{theorem}\label{thm:generalconv}
Let $\nu_t^{(1:N)}$ denote the offspring numbers in an IPS satisfying the standing assumption and such that, for any $N$ sufficiently large, $\Prob[ \tau_N(t) = \infty ] =0$ for all finite $t$. Suppose that there exists a deterministic sequence $(b_N)_{N\geq1}$ such that ${\lim}_{N\to\infty} b_N =0$ and
\begin{equation}\label{eq:mainthmcond}
\frac{1}{(N)_3} \sum_{i = 1}^N \Et\left[ (\nu_t^{(i)})_3 \right]  \leq b_N \frac{1}{(N)_2} \sum_{i = 1}^N \Et\left[ (\nu_t^{(i)})_2 \right]
\end{equation}
for all $N$, uniformly in $t \geq 1$.
Then the rescaled genealogical process $(G_{\tau_N(t)}^{(n,N)})_{t\geq0}$ converges in the sense of finite-dimensional distributions to Kingman's $n$-coalescent as $N \to \infty$.
\end{theorem}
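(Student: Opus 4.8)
The plan is to follow the route of \cite{mohle1998} and \cite{koskela2018}. The standing assumption is what makes the argument tractable: conditional on the whole offspring-count sequence $\sigma(\nu_s^{(1:N)}:s\ge1)$, the parental assignments are independent across generations and uniform within each, so the sampled genealogy $(G_t^{(n,N)})_{t\ge0}$ is, in this \emph{quenched} picture, a time-inhomogeneous Markov chain whose one-step law at generation $t$ is determined by $\nu_t^{(1:N)}$, and $\tau_N(t)$, $c_N(t)$ and $D_N(t)$ all become deterministic. I would prove convergence of the conditional finite-dimensional distributions to those of the $n$-coalescent and then recover the stated annealed convergence by bounded convergence, since the conditional laws are probabilities.

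For the one-step analysis I would invoke \cite[Lemma 1]{koskela2018}: for a configuration with $k\le n$ blocks, the conditional probability that a prescribed pair of lineages, and no others, coalesces at generation $t$ equals $c_N(t)$ up to an $O(c_N(t)^2)$ correction, while the probability that three or more lineages coalesce, or that two disjoint pairs coalesce simultaneously, is at most a combinatorial multiple of $D_N(t)$. Hence the one-step transition matrix can be written $\Pi_t = I + c_N(t)\,Q + E_t$, with $Q$ the generator of Definition~\ref{def:kingman} and $\lVert E_t\rVert \le C(n)(D_N(t)+c_N(t)^2)$. Since $\sum_{r=1}^{\tau_N(t)}c_N(r)\to t$ by the definition of $\tau_N$, the product of transition matrices over a rescaled window converges to the Kingman semigroup $\exp((t-s)Q)$ as soon as the accumulated errors $\sum_r D_N(r)$ and $\sum_r c_N(r)^2$ over the window vanish, so everything reduces to controlling these two sums.

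This is where \eqref{eq:mainthmcond} replaces the fourth-moment hypotheses of \cite{koskela2020annals}. Using $(\nu_t^{(i)})_2\nu_t^{(i)} = (\nu_t^{(i)})_3 + 2(\nu_t^{(i)})_2$, the ``three-into-one'' part of $D_N(t)$ equals $\tfrac{N-2}{N}(N)_3^{-1}\sum_i(\nu_t^{(i)})_3 + \tfrac2N c_N(t)$, which \eqref{eq:mainthmcond} bounds by $(b_N+2/N)\Et[c_N(t)]$. The genuinely new difficulty is the ``two-disjoint-pairs'' cross term, of order $c_N(t)^2$, which is not dominated by any single falling-factorial moment and admits only the useless pointwise bound $c_N(t)^2\le c_N(t)$. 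I would resolve it by showing that \eqref{eq:mainthmcond} forces $c_N$ itself to be uniformly small: thresholding the families at some $K>2$, those with $\nu_t^{(i)}\le K$ contribute at most $K/(N-1)$ to $c_N(t)$ because $\sum_i\nu_t^{(i)}=N$, whereas for $\nu_t^{(i)}>K$ one has $(\nu_t^{(i)})_2\le(K-2)^{-1}(\nu_t^{(i)})_3$, so the remainder contributes at most $\tfrac{N-2}{K-2}(N)_3^{-1}\sum_i(\nu_t^{(i)})_3$. Taking $\Et[\cdot]$, applying \eqref{eq:mainthmcond} to one factor of $c_N(t)$, and optimizing over $K$ yields the clean conditional estimates $\Et[D_N(t)]\le\hat b_N\Et[c_N(t)]$ and $\Et[c_N(t)^2]\le\tilde b_N\Et[c_N(t)]$, uniformly in $t$, with $\hat b_N,\tilde b_N\to0$.

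To turn these per-generation estimates into control of the accumulated errors over the random window, I would use that $\tau_N(t)$ is a stopping time for $(\mathcal F_r)$ and is finite by hypothesis: the conditional bounds make $\sum_{r=1}^{s}(D_N(r)-\hat b_N c_N(r))$ and $\sum_{r=1}^s(c_N(r)^2-\tilde b_N c_N(r))$ supermartingales, so optional stopping gives $\E[\sum_{r=1}^{\tau_N(t)}D_N(r)]\le\hat b_N\,\E[\sum_{r=1}^{\tau_N(t)}c_N(r)]\le\hat b_N(t+1)\to0$, and likewise $\E[\sum_{r=1}^{\tau_N(t)}c_N(r)^2]\to0$. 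The latter also forces $\max_{r\le\tau_N(t)}c_N(r)\to0$ in probability, so the overshoot in the time change is negligible and $\sum_{r}c_N(r)\to t-s$ over windows; the transition-matrix products then converge, conditionally in probability, to $\exp((t-s)Q)$, and bounded convergence completes the proof. I expect the cross term to be the main obstacle: unlike the three-into-one contribution it cannot be absorbed into a single moment, and the truncation argument---showing that the third-moment control already forces $c_N\to0$---is exactly what makes the second/third-moment condition sufficient and dispenses with fourth moments.
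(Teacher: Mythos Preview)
Your outline is correct and tracks the paper's proof closely: both reduce everything to the single conditional estimate $\Et[D_N(t)]\le \hat b_N\,\Et[c_N(t)]$ with $\hat b_N\to0$, obtained by a thresholding argument on the family sizes, and then push this through an optional-stopping device (the paper cites \cite[Lemma~2]{koskela2018}, which is exactly your supermartingale step) to control $\E\big[\sum_{r=\tau_N(s)+1}^{\tau_N(t)}D_N(r)\big]$.

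A few points of comparison are worth recording. First, your threshold $K$ has to grow with $N$; a fixed $K$ gives $\Et[c_N(t)]\le K/(N-1)+((N-2)/(K-1))\,b_N\,\Et[c_N(t)]$, which is useless when $Nb_N\to\infty$. The paper instead thresholds at $N\varepsilon$ and bounds the large-family contribution by Markov's inequality applied to $(\nu_t^{(i)})_3$, which is slightly cleaner and makes the role of $\varepsilon$ transparent. Second, the paper never treats $c_N^2$ separately: its Lemma~\ref{lem:removeass3} shows $c_N(t)^2\le \tfrac{N}{N-1}D_N(t)$ pointwise, so bounding $D_N$ suffices; your route (bounding the cross term by $c_N^2$ and then $c_N^2$ by the threshold estimate) is equivalent but one step longer. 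Third, your quenched matrix-product formulation $\prod_r(I+c_N(r)Q+E_r)\to e^{(t-s)Q}$ sidesteps the growth condition $\E[\tau_N(t)-\tau_N(s)]\le C_{t,s}N$ of \cite{koskela2018} rather painlessly, whereas the paper's main technical effort (Lemma~\ref{lem:removeass4} and Section~\ref{sec:proof}) is to rework the explicit upper/lower FDD bounds of that reference, sharpening the lower bound on the identity transition $p_{\xi\xi}(t)$ so that the growth condition is no longer needed. Your packaging is more economical; the paper's makes explicit exactly which inequality in the earlier argument was slack.
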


\begin{proof}[Proof outline]
Here we present the high level argument. In the interests of clarity, details are deferred to a sequence of lemmata which follow this result, and to Section~\ref{sec:proof}.

Theorem \ref{thm:generalconv} has the same conclusion as \cite[Theorem 1]{koskela2018} for which the conditions are
\begin{align}
\E[ c_N(t) ] &\rightarrow 0 , \label{eq:oldass1}\\
\E \left[ \sum_{ r = \tau_N( s ) + 1 }^{ \tau_N( t ) } D_N( r ) \right] &\rightarrow 0, \label{eq:oldass2}\\
\E \left[ \sum_{ r = \tau_N( s ) + 1 }^{ \tau_N( t ) } c_N( r )^2 \right] &\rightarrow 0 , \label{eq:oldass3}\\
\E [ \tau_N(t) - \tau_N(s) ] &\leq C_{t,s} N; \label{eq:oldass4}
\end{align}
as $N\to\infty$, for some strictly positive constant $C_{t,s}$ that does not depend on $N$. We show that the more tractable conditions of the present theorem are sufficient for the result.

The series of Lemmata \ref{lem:removeass3}--\ref{lem:removeass2} below show that the assumptions \eqref{eq:oldass1}--\eqref{eq:oldass3} follow from \eqref{eq:mainthmcond}. Lemma \ref{lem:removeass4} allows us to remove condition \eqref{eq:oldass4} by improving upon some arguments from the proof of \cite[Theorem 1]{koskela2018}; this argument is presented in detail in Section~\ref{sec:proof}.
\end{proof}

Our result improves on \cite{koskela2018} by eliminating the restrictive condition \eqref{eq:oldass4}, which is shown in Lemma~\ref{lem:removeass4} to be unnecessary. This allows our result to apply to some models not previously included; for example the neutral Moran model of population genetics violates \eqref{eq:oldass4} but is covered by Theorem~\ref{thm:generalconv}. 
In neutral models the straightforward analogue of \eqref{eq:mainthmcond} is both necessary and sufficient \cite[Theorem 5.4]{mohle2003}, suggesting that in general this condition is not significantly stronger than \eqref{eq:oldass1}--\eqref{eq:oldass3} combined.

\begin{lemma} \label{lem:removeass3}
$\eqref{eq:oldass2} \Rightarrow \eqref{eq:oldass3}$.
\end{lemma}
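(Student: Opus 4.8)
The plan is to prove the implication by reducing it to a single deterministic (pathwise) comparison of the two summands, namely the pointwise inequality
\begin{equation*}
c_N(t)^2 \leq D_N(t) \qquad \text{for every realization of } \nu_t^{(1:N)} \text{ and every } t.
\end{equation*}
Granting this, \eqref{eq:oldass3} follows from \eqref{eq:oldass2} with no further work: both expressions are expectations of sums of nonnegative terms over the \emph{same} random range $r \in \{\tau_N(s)+1,\dots,\tau_N(t)\}$, so the termwise bound gives $\sum_{r} c_N(r)^2 \leq \sum_{r} D_N(r)$ as an inequality between nonnegative random variables, and monotonicity of expectation transfers the convergence to $0$. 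Thus the whole lemma rests on the pointwise bound, and everything else is routine.

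To prove the pointwise bound, write $x_i := \nu_t^{(i)}$ and $S := \sum_{i=1}^N (x_i)_2$, so that $c_N(t) = S/(N)_2$. Since the offspring counts are nonnegative and sum to $N$, we have $x_i \leq N$, hence $N x_i \geq x_i^2$. First I would use this to lower-bound the bracketed factor in $D_N(t)$ by the full second moment:
\begin{equation*}
x_i + \frac{1}{N}\sum_{j\neq i} x_j^2 \;\geq\; \frac{1}{N} x_i^2 + \frac{1}{N}\sum_{j\neq i} x_j^2 \;=\; \frac{1}{N}\sum_{j=1}^N x_j^2 \;=\; \frac{M}{N},
\end{equation*}
where $M := \sum_{j=1}^N x_j^2$. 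Substituting into the definition of $D_N(t)$ gives $D_N(t) \geq M S / (N^2 (N)_2)$. The second step is to bound $M$ below by a multiple of $S$ using two elementary facts: the identity $M = \sum_i [(x_i)_2 + x_i] = S + N$, and the extremal bound $S = \sum_i x_i(x_i-1) \leq (N-1)\sum_i x_i = (N)_2$ (again from $x_i \leq N$). Together these yield $M = S + N \geq \tfrac{N}{N-1} S$, since that rearranges to exactly $S \leq N(N-1) = (N)_2$. Feeding this back in and simplifying via $(N-1)N(N)_2 = (N)_2^2$ gives $D_N(t) \geq S^2/(N)_2^2 = c_N(t)^2$, as required.

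The main obstacle is not any deep difficulty but locating the correct chain of elementary inequalities, because the comparison is genuinely tight and the naive estimates fail. In particular, replacing $(x_j)_2$ by $x_j^2$ loses a factor of $N/(N-1)$ in the \emph{wrong} direction, so a crude termwise match of the quadratic expansion of $c_N(t)^2$ against $D_N(t)$ does not close. The two observations that rescue the argument are (i) that the bracket in $D_N(t)$ can be lower-bounded by the complete second moment $M/N$ using $x_i \leq N$, and (ii) that the extremal bound $S \leq (N)_2$ supplies \emph{precisely} the factor $N/(N-1)$ needed to absorb the gap between $N^2(N)_2$ and $(N)_2^2$. That no slack is available is confirmed by the extremal configuration $x_1 = N$, $x_j = 0$ for $j \geq 2$, for which $c_N(t)^2 = D_N(t) = 1$; this tightness is why both observations are essential.
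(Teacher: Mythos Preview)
Your proof is correct. Both you and the paper reduce the implication to a pointwise comparison between $c_N(t)^2$ and $D_N(t)$, but the two arguments diverge in how that comparison is carried out. The paper expands $c_N(t)^2$ as a double sum $\tfrac{1}{(N)_2^2}\sum_i (\nu_t^{(i)})_2 \sum_j (\nu_t^{(j)})_2$, splits off the diagonal, and then bounds each piece so as to match the shape of $D_N(t)$ term by term; in doing so it incurs an extra factor and obtains only $c_N(t)^2 \leq \tfrac{N}{N-1} D_N(t)$, which of course still suffices. You instead lower-bound $D_N(t)$: the observation $x_i \geq x_i^2/N$ lets you replace the bracket by the full second moment $M/N$, and then the identity $M = S + N$ together with the extremal bound $S \leq (N)_2$ supplies exactly the factor $N/(N-1)$ needed to convert $N^2(N)_2$ into $(N)_2^2$. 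The upshot is that your route yields the sharp inequality $c_N(t)^2 \leq D_N(t)$ with constant $1$, attained at the configuration $\nu_t^{(1)} = N$, $\nu_t^{(j)} = 0$ for $j \geq 2$, whereas the paper's bound is slightly loose. Both arguments are short and elementary; yours is marginally cleaner because the two inequalities you use are each individually transparent, and the tightness check confirms nothing has been wasted.
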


\begin{proof}
It is sufficient to show that $c_N( t )^2 \leq D_N( t ) N/(N-1)$.
We have
\begin{align*}
c_N( t )^2 &= \frac{ 1 }{ N ( N - 1 ) ( N )_2 } \sum_{ i = 1 }^N ( \nu_t^{(i)})_2 \Bigg\{ \nu_t^{(i)} ( \nu_t^{(i)} - 1 ) + \sum_{\substack{j=1\\ j \neq i }}^N ( \nu_t^{(j)} )_2 \Bigg\} \\
&= \frac{ 1 }{ N ( N )_2 } \sum_{ i = 1 }^N ( \nu_t^{(i)} )_2 \Bigg\{ \frac{ \nu_t^{(i)} ( \nu_t^{(i)} - 1 ) }{ N - 1 } + \frac{ 1 }{ N - 1 } \sum_{\substack{j=1\\ j \neq i }}^N ( \nu_t^{(j)} )_2 \Bigg\} \\
&\leq \frac{ 1 }{ N ( N )_2 } \sum_{ i = 1 }^N ( \nu_t^{(i)})_2 \Bigg\{ \nu_t^{(i)} + \frac{ 1 }{ N - 1 } \sum_{\substack{j=1\\ j \neq i }}^N ( \nu_t^{(j)} )_2 \Bigg\} \\
&\leq \frac{ 1 }{ N ( N )_2 } \sum_{ i = 1 }^N ( \nu_t^{(i)})_2 \Bigg\{ \nu_t^{(i)} + \frac{ N / ( N - 1 ) }{ N } \sum_{\substack{j=1\\ j \neq i }}^N ( \nu_t^{(j)} )^2 \Bigg\} \\
&\leq \frac{ N / ( N - 1 ) }{ N ( N )_2 } \sum_{ i = 1 }^N ( \nu_t^{(i)})_2 \Bigg\{ \nu_t^{(i)} + \frac{ 1 }{ N } \sum_{\substack{j=1\\ j \neq i }}^N ( \nu_t^{(j)} )^2 \Bigg\} = \frac{ N }{ N - 1 } D_N( t )
\end{align*}
which concludes the proof.
\end{proof}

\begin{lemma} \label{lem:removeass1}
$\eqref{eq:mainthmcond} \Rightarrow \eqref{eq:oldass1}$.
\end{lemma}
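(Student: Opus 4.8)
The plan is to control the conditional expectation of $\sum_{i=1}^N (\nu_t^{(i)})_2$ and then recover \eqref{eq:oldass1} by the tower property, since \eqref{eq:coalescence_rate} gives $\E[c_N(t)] = \E\big[\sum_{i=1}^N (\nu_t^{(i)})_2\big]/(N)_2$. The argument rests on two facts. First, the deterministic constraint $\sum_{i=1}^N \nu_t^{(i)} = N$, which holds because every valid resampling scheme produces exactly $N$ offspring. Second, the observation that \eqref{eq:mainthmcond} can be rewritten, using $(N)_3/(N)_2 = N-2$, as the self-referential bound $\Et\big[\sum_i (\nu_t^{(i)})_3\big] \le b_N(N-2)\,\Et\big[\sum_i (\nu_t^{(i)})_2\big]$.

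First I would establish an elementary interpolation inequality: for every integer $\nu \ge 0$ and every $\epsilon \in (0, 1/2]$,
\[ (\nu)_2 \le \frac{\nu}{2\epsilon} + \epsilon\, (\nu)_3. \]
For $\nu \in \{0,1\}$ both sides are immediate, for $\nu = 2$ the claim reduces to $2 \le 1/\epsilon$, and for $\nu \ge 3$ one checks the sharp bound $(\nu)_2^2 \le 2\nu\,(\nu)_3$ directly and then applies the arithmetic--geometric mean inequality to $\sqrt{2\nu(\nu)_3} = \sqrt{(\nu/\epsilon)\,(2\epsilon(\nu)_3)}$. Summing over $i$ and invoking the constraint $\sum_i \nu_t^{(i)} = N$ gives the pointwise bound $\sum_i (\nu_t^{(i)})_2 \le \tfrac{N}{2\epsilon} + \epsilon \sum_i (\nu_t^{(i)})_3$, and taking $\Et$ yields
\[ \Et\Big[\sum_i (\nu_t^{(i)})_2\Big] \le \frac{N}{2\epsilon} + \epsilon\, \Et\Big[\sum_i (\nu_t^{(i)})_3\Big]. \]

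The final step is a self-bounding (absorption) argument. Substituting the rewritten form of \eqref{eq:mainthmcond} into the last display gives $\Et[\sum_i (\nu_t^{(i)})_2] \le \tfrac{N}{2\epsilon} + \epsilon\, b_N(N-2)\,\Et[\sum_i (\nu_t^{(i)})_2]$, valid for every admissible $\epsilon$. Choosing $\epsilon = \epsilon_N := \{2(1 + b_N(N-2))\}^{-1} \in (0,1/2]$ forces the coefficient $\epsilon_N b_N(N-2) < 1/2$, so the $\sum_i (\nu_t^{(i)})_2$ term can be moved to the left; since offspring counts are bounded by $N$ all expectations are finite and this rearrangement is legitimate. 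This yields $\Et[\sum_i (\nu_t^{(i)})_2] \le 2N(1 + b_N(N-2))$, whence dividing by $(N)_2 = N(N-1)$ and taking expectations gives $\E[c_N(t)] \le 2(1 + b_N(N-2))/(N-1)$, which tends to $0$ as $N \to \infty$, uniformly in $t$ since $b_N$ is.

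The main obstacle is that the naive inequality $(\nu)_2 \le \nu + (\nu)_3$ (the $\epsilon = 1$ endpoint) is too weak: combined with \eqref{eq:mainthmcond} it only produces $\Et[\sum_i (\nu_t^{(i)})_2] \le N + b_N(N-2)\,\Et[\sum_i(\nu_t^{(i)})_2]$, and the factor $b_N(N-2)$ need not be smaller than $1$ even though $b_N \to 0$. Introducing the free parameter $\epsilon$ and tuning it against $b_N(N-2)$ is precisely what makes the absorption step close, and is the crux of the argument.
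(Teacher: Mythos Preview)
Your argument is correct and takes a genuinely different route from the paper. The paper follows M\"ohle's truncation strategy: fix $\epsilon>0$, split each $\Et[(\nu_t^{(i)})_2]$ according to the event $A_i=\{\nu_t^{(i)}\le N\epsilon\}$, bound the contribution on $A_i$ by $\epsilon\cdot\nu_t^{(i)}$ and the contribution on $A_i^c$ by $(N)_2\cdot\Prob[A_i^c\mid\mathcal{F}_{t-1}]$, and then control the tail probabilities via Markov's inequality with third factorial moments, arriving at $\Et[c_N(t)]\le\{1+O(N^{-1})\}(\epsilon+b_N/\epsilon^3)$ before letting $N\to\infty$ and then $\epsilon\downarrow0$. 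You instead prove the pointwise interpolation $(\nu)_2\le\nu/(2\epsilon)+\epsilon(\nu)_3$ directly via AM--GM, sum it, and run a self-bounding absorption with an $N$-dependent choice of $\epsilon$. Your route is more elementary (no truncation, no Markov) and delivers the explicit non-asymptotic bound $\E[c_N(t)]\le 2(1+b_N(N-2))/(N-1)$ in one stroke. The paper's route, on the other hand, has the advantage that its intermediate estimate \eqref{markovs_ineq} and the $A_i$-decomposition are reused verbatim in the proof of Lemma~\ref{lem:removeass2}, so there is some economy in setting things up that way.
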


\begin{proof}
Following the proof of \cite[Lemma 5.5]{mohle2003}, we fix $\epsilon > 0$ and define the event $A_i = \{ \nu_t^{(i)} \leq N \epsilon \}$.
Now
\begin{align}
\Et[ c_N( t ) ] &= \frac{ 1 }{ ( N )_2 } \sum_{ i = 1 }^N \Et\left[ ( \nu_t^{(i)} )_2 \right] 
= \frac{ 1 }{ ( N )_2 } \sum_{ i = 1 }^N \left\{ \Et\left[ ( \nu_t^{(i)} )_2 \mathds{ 1 }_{ A_i } \right] + \Et\left[ ( \nu_t^{(i)} )_2 \mathds{ 1 }_{ A_i^c } \right] \right\} \nonumber \\
&\leq \frac{ \epsilon }{ N - 1 } \sum_{ i = 1 }^N \Et\left[ \nu_t^{(i)} \mathds{ 1 }_{ A_i } \right] + \sum_{ i = 1 }^N \Et\left[ \mathds{ 1 }_{ A_i^c } \right] \nonumber \\
&\leq \{ 1 + O( N^{ -1 } ) \} \epsilon + \sum_{ i = 1 }^N \Prob\left[ \nu_t^{(i)} > N \epsilon \mid \mathcal{F}_{t-1} \right]. \label{cond_cN}
\end{align}
For $N \geq 3 / \epsilon$, Markov's inequality yields
\begin{align}
\sum_{ i = 1 }^N \Prob\left[ \nu_t^{(i)} > N \epsilon \mid \mathcal{F}_{t-1} \right] &\leq \frac{ 1 }{ ( N \epsilon )_3 } \sum_{ i = 1 }^N \Et\left[ ( \nu_t^{(i)} )_3\right] = \frac{ \{ 1 + O( N^{ -1 } ) \} }{ \epsilon^3 ( N )_3 } \sum_{ i = 1 }^N \Et\left[ ( \nu_t^{(i)} )_3 \right] \nonumber \\
&\leq \{ 1 + O( N^{ -1 } ) \} \frac{ b_N }{ \epsilon^3 } \Et[ c_N( t ) ]. \label{markovs_ineq}
\end{align}
Substituting \eqref{markovs_ineq} into \eqref{cond_cN} and using $c_N( t ) \leq 1$ results in
\begin{equation*}
\Et[ c_N( t ) ] \leq \{ 1 + O( N^{ -1 } ) \} \Bigg( \epsilon + \frac{ b_N }{ \epsilon^3 } \Bigg) \underset{N\to\infty}{\longrightarrow} \epsilon
\end{equation*}
since $b_N \rightarrow 0$. 
As $\epsilon > 0$ was arbitrary, we have
$
\E[ c_N( t ) ] = \E\left[ \Et[ c_N( t ) ] \right ] \rightarrow 0
$
as $N \rightarrow \infty$.
\end{proof}

\begin{lemma} \label{lem:removeass2}
$\eqref{eq:mainthmcond} \Rightarrow \eqref{eq:oldass2}$.
\end{lemma}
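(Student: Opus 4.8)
The plan is to dominate $D_N(t)$ pathwise by a constant multiple of the normalized third factorial moment plus a small multiple of $c_N(t)$, then take the predictable (conditional) projection over the random summation window and invoke \eqref{eq:mainthmcond}. First I expand $D_N(t)$: writing $(\nu_t^{(i)})_2\,\nu_t^{(i)} = (\nu_t^{(i)})_3 + 2(\nu_t^{(i)})_2$ and using $\sum_{j\neq i}(\nu_t^{(j)})^2 \leq \sum_{j=1}^N(\nu_t^{(j)})^2 = (N)_2 c_N(t) + N$ together with $\sum_i(\nu_t^{(i)})_2 = (N)_2 c_N(t)$, a routine rearrangement gives
\[
D_N(t) \leq \frac{1}{N(N)_2}\sum_{i=1}^N (\nu_t^{(i)})_3 + c_N(t)^2 + \frac{3 c_N(t)}{N}.
\]
The first term is at most $\frac{1}{(N)_3}\sum_i(\nu_t^{(i)})_3$ and the third is $O(N^{-1})$ times $c_N(t)$, so both are already of the right form; the crux is the quadratic term $c_N(t)^2$, which is the contribution of two simultaneous pair-mergers and is exactly the term whose naive control would demand fourth moments.

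The key step is a pathwise inequality showing that a large pair-merger rate forces a large third factorial moment. Let $\nu^\ast := \max_i \nu_t^{(i)}$. Since $\nu_t^{(i)} - 1 \leq \nu^\ast - 1$ for every $i$, summing yields $(N)_2 c_N(t) = \sum_i(\nu_t^{(i)})_2 \leq (\nu^\ast-1)N$, whence $\nu^\ast \geq 1 + (N-1)c_N(t)$. Because $\sum_i(\nu_t^{(i)})_3 \geq (\nu^\ast)_3$ and $x\mapsto (x)_3$ is increasing, on $\{c_N(t) > \delta\}$ we get $\nu^\ast > 1 + (N-1)\delta$ and hence $\frac{1}{(N)_3}\sum_i(\nu_t^{(i)})_3 \geq \frac{(1+(N-1)\delta)_3}{(N)_3} \to \delta^3$ as $N\to\infty$. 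Thus for each fixed $\delta\in(0,1)$ and all $N$ large enough $\1{c_N(t) > \delta} \leq 2\delta^{-3}\frac{1}{(N)_3}\sum_i(\nu_t^{(i)})_3$; combining this with $c_N(t)^2 \leq \delta c_N(t) + c_N(t)\1{c_N(t)>\delta} \leq \delta c_N(t) + \1{c_N(t)>\delta}$ (using $c_N(t)\leq 1$) gives
\[
c_N(t)^2 \leq \delta\, c_N(t) + 2\delta^{-3}\frac{1}{(N)_3}\sum_{i=1}^N (\nu_t^{(i)})_3 .
\]
Crucially the third-moment term carries the natural $(N)_3$ normalization, so only $b_N\to 0$ (rather than $Nb_N\to 0$) will be needed.

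Substituting this into the bound for $D_N(t)$, taking $\Et[\cdot]$, and applying \eqref{eq:mainthmcond} to both third-moment contributions gives, for fixed $\delta$ and all large $N$,
\[
\Et[D_N(t)] \leq \Big[(1 + 2\delta^{-3}) b_N + \delta + \tfrac{3}{N}\Big]\,\Et[c_N(t)].
\]
To pass to the random window I set $H_r := \1{\tau_N(s) < r \leq \tau_N(t)}$, which is $\mathcal F_{r-1}$-measurable because $\tau_N(s),\tau_N(t)$ are stopping times for $(\mathcal F_r)$; the predictable-projection identity $\E[\sum_r H_r Y_r] = \E[\sum_r H_r \Et[Y_r]]$ then holds for any nonnegative adapted $(Y_r)$ by the tower property and Tonelli. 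Applying it with $Y_r = D_N(r)$ and the displayed conditional bound, then with $Y_r = c_N(r)$ together with the deterministic estimate $\sum_{r=\tau_N(s)+1}^{\tau_N(t)} c_N(r) \leq t - s + 1$ (immediate from the definition of $\tau_N$ and $c_N\leq 1$), yields
\[
\E\Big[\sum_{r=\tau_N(s)+1}^{\tau_N(t)} D_N(r)\Big] \leq \Big[(1+2\delta^{-3})b_N + \delta + \tfrac{3}{N}\Big](t - s + 1).
\]
Letting $N\to\infty$ (so $b_N,\,N^{-1}\to 0$) and then $\delta\to 0$ establishes \eqref{eq:oldass2}.

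The main obstacle is precisely the quadratic term $c_N(t)^2$: this is the contribution that in \cite{koskela2018} forced the use of fourth-moment (and cross-term) control. The whole weight of the argument therefore rests on the pathwise inequality of the second step, which trades ``large $c_N$'' for ``large third moment'' with the correct $(N)_3$ normalization; getting that normalization right is what keeps the hypothesis at $b_N\to 0$ and dictates the order of limits ($N\to\infty$ before $\delta\to 0$).
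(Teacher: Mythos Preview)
Your proof is correct. Both arguments end at the same conditional inequality $\Et[D_N(t)] \leq o(1)\cdot\Et[c_N(t)]$ and then pass through the random summation window via predictability of $\1{\tau_N(s) < r \leq \tau_N(t)}$ (what the paper cites as \cite[Lemma~2]{koskela2018}), but the treatment of the cross term differs. The paper keeps the double sum $\sum_{i\neq j}\Et[(\nu_t^{(i)})_2(\nu_t^{(j)})_2]$ and truncates each individual offspring count on $A_i=\{\nu_t^{(i)}\leq N\epsilon\}$, controlling $\Prob[A_i^c\mid\mathcal F_{t-1}]$ by Markov's inequality applied to $(\nu_t^{(i)})_3$; this mirrors the proof of Lemma~\ref{lem:removeass1} and reuses \eqref{markovs_ineq} verbatim. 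You instead bound the cross term pathwise by $c_N(t)^2$ and truncate on $c_N(t)$ itself, using the deterministic relation $\nu^\ast\geq 1+(N-1)c_N(t)$ to dominate $\1{c_N(t)>\delta}$ by $2\delta^{-3}(N)_3^{-1}\sum_i(\nu_t^{(i)})_3$; no Markov step is needed and the whole argument up to the final conditioning is pathwise. Your route is a little cleaner in that it avoids the double sum over $i\neq j$ and the separate handling of \eqref{DN_part_3}--\eqref{DN_part_4}; the paper's route has the virtue of recycling the machinery already set up for Lemma~\ref{lem:removeass1}. The order of limits ($N\to\infty$ before $\delta$ or $\epsilon\to 0$) and the reliance on $b_N\to 0$ are the same in both.
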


\begin{proof}
We decompose $D_N(t)$ as the sum of two terms and consider their filtered expectations. The first is
\begin{align}
\frac{ 1 }{ N ( N )_2 } \sum_{ i = 1 }^N \Et\left[ ( \nu_t^{(i)} )_2 \nu_t^{(i)}\right] &= \frac{ 1 }{ N ( N )_2 } \sum_{ i = 1 }^N \Et\left[ 2 ( \nu_t^{(i)} )_2 + ( \nu_t^{(i)} )_3 \right] \nonumber \\
&\leq \frac{ 2 }{ N } \Et[ c_N( t ) ] + \frac{ 1 }{ ( N )_3 } \sum_{ i = 1 }^N \Et\left[ ( \nu_t^{(i)} )_3 \right] \nonumber \\
&\leq \Bigg(\frac{ 2 }{ N } + b_N \Bigg) \Et[ c_N( t ) ]. \label{DN_part_1}
\end{align}
The second is
\begin{align}
\frac{ 1 }{ N^2 ( N )_2 } \sum_{ j=1 }^N \sum_{ i \neq j } \Et\left[ ( \nu_t^{(i)} )_2 ( \nu_t^{(j)} )^2 \right] &= \frac{ 1 }{ N^2 ( N )_2 } \sum_{ j=1 }^N \sum_{ i \neq j } \Et\left[ ( \nu_t^{(i)} )_2 ( \nu_t^{(j)} )_2 + ( \nu_t^{(i)} )_2 \nu_t^{(j)} \right] \nonumber \\
&\leq \frac{ 1 }{ N^2 ( N )_2 } \sum_{ j=1 }^N \sum_{ i \neq j } \Et\left[ ( \nu_t^{(i)} )_2 ( \nu_t^{(j)} )_2 \right] + \frac{ \Et[ c_N( t ) ] }{ N }. \label{DN_part_2}
\end{align}
Now, with $A_i$ defined as in Lemma \ref{lem:removeass1},
\begin{align}
\sum_{ j=1 }^N \sum_{ i \neq j } \Et\left[ ( \nu_t^{(i)} )_2 ( \nu_t^{(j)} )_2\right] &= \sum_{ j=1 }^N \sum_{ i \neq j } \left\{ \Et\left[ ( \nu_t^{(i)} )_2 ( \nu_t^{(j)} )_2 \mathds{ 1 }_{ A_i } \right] + \Et\left[ ( \nu_t^{(i)} )_2 ( \nu_t^{(j)} )_2 \mathds{ 1 }_{ A_i^c } \right] \right\} \nonumber \\
&\leq N \epsilon \sum_{ j=1 }^N \sum_{ i \neq j } \Et\left[ \nu_t^{(i)} ( \nu_t^{(j)} )_2 \mathds{ 1 }_{ A_i } \right] + N^3 \sum_{ j=1 }^N \sum_{ i \neq j } \Et\left[ \nu_t^{(j)} \mathds{ 1 }_{ A_i^c } \right] \nonumber \\
&\leq N^2 ( N )_2 \epsilon \Et[ c_N( t ) ] + N^4 \sum_{ i = 1 }^N \Prob\left[ \nu_t^{(i)} > N \epsilon \mid \mathcal{F}_{t-1} \right] . \label{DN_part_3}
\end{align}
Substituting \eqref{markovs_ineq} into \eqref{DN_part_3} yields
\begin{equation}
\sum_{ j=1 }^N \sum_{ i \neq j } \Et\left[ ( \nu_t^{(i)} )_2 ( \nu_t^{(j)} )_2 \right] \leq N^4 \{ 1 + O( N^{ -1 } ) \} \Bigg( \epsilon + \frac{ b_N }{ \epsilon^3 } \Bigg) \Et[ c_N( t ) ] , \label{DN_part_4}
\end{equation}
and substituting \eqref{DN_part_4} into \eqref{DN_part_2} gives
\begin{equation}
\frac{ 1 }{ N^2 ( N )_2 } \sum_{ j=1 }^N \sum_{ i \neq j } \Et\left[ ( \nu_t^{(i)} )_2 ( \nu_t^{(j)} )^2 \right]
\leq \Bigg[ \{ 1 + O( N^{ -1 } ) \} \Big( \epsilon + \frac{ b_N }{ \epsilon^3 } \Big) + \frac{ 1 }{ N } \Bigg] \Et[ c_N( t ) ] . \label{DN_last}
\end{equation}
Combining \eqref{DN_part_1} and \eqref{DN_last}, we have that
\begin{equation*}
\Et[ D_N(t) ] \leq \Bigg[ \{ 1 + O( N^{ -1 } ) \} \Bigg( \epsilon + \frac{ b_N }{ \epsilon^3 } \Bigg) + \frac{ 3 }{ N } + b_N \Bigg] \Et[ c_N(t) ] .
\end{equation*}
Finally, invoking \cite[Lemma 2]{koskela2018} twice gives
\begin{align*}
\E\left[ \sum_{ r = \tau_N( s ) + 1 }^{ \tau_N( t ) } D_N( r ) \right] 
&= \E\left[ \sum_{ r = \tau_N( s ) + 1 }^{ \tau_N( t ) } \E_r[ D_N( r ) ] \right] \\
&\leq \Bigg[ \{ 1 + O( N^{ -1 } ) \} \Bigg( \epsilon + \frac{ b_N }{ \epsilon^3 } \Bigg) + \frac{ 3 }{ N } + b_N \Bigg] 
\E\left[ \sum_{ r = \tau_N( s ) + 1 }^{ \tau_N( t ) } c_N( r ) \right] \\
&\leq \Bigg[ \{ 1 + O( N^{ -1 } ) \} \Bigg( \epsilon + \frac{ b_N }{ \epsilon^3 } \Bigg) + \frac{ 3 }{ N } + b_N \Bigg] ( t - s + 1 )\\
& \underset{N\to\infty}{\longrightarrow} \epsilon ( t - s + 1 ),
\end{align*}
and recalling that $\epsilon > 0$ was arbitrary concludes the proof.
\end{proof}

For Lemma \ref{lem:removeass4}, we introduce the quantity $p_{\xi\eta}(t)$. For any fixed $n$ and $N$, $p_{\xi\eta}(t)$ is the time $t-1$ conditional transition probability of the genealogical process from $\xi$ to $\eta$, where $\xi$ and $\eta$ are partitions of $\{1,\dots,n\}$. The transition probability is non-zero only when $\eta =\xi$ or $\eta$ can be obtained from $\xi$ by merging some blocks.
Let $b_j\, (j=1,\dots,|\xi|)$ denote the number of blocks in $\xi$ that merged to form block $j$ of $\eta$.

\begin{lemma}\label{lem:removeass4}
\begin{equation}\label{eq:removeass4}
p_{ \xi \xi }( t ) \geq 1 - B_{ | \xi |  } \{ 1 + O( N^{ -1 } ) \} D_N( t ) -  \binom{ | \xi | }{ 2 } \{ 1 + O( N^{-1} ) \} c_N( t ) ,
\end{equation}
for a constant $B_{ | \xi |  } > 0$ increasing in $|\xi|$ that does not depend on $N$.
\end{lemma}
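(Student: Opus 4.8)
The plan is to compute $p_{\xi\xi}(t)$ exactly, conditional on the offspring counts $\nu_t^{(1:N)}$, and then control its complement. Write $k=|\xi|$. Each block of $\xi$ is carried by a single ancestral particle at time $t-1$, so the event of remaining at $\xi$ is precisely the event that these $k$ particles are assigned $k$ pairwise distinct parents in the resampling step from time $t-1$ to time $t$. Under the standing assumption the parental assignment is uniform over valid configurations given the counts, which yields the explicit formula $p_{\xi\eta}(t) = (N)_{|\xi|}^{-1}\sum^{\ast}\prod_{j=1}^{|\eta|}(\nu_t^{(i_j)})_{b_j}$ for any target $\eta$ obtained from $\xi$ by merging its blocks into groups of sizes $b_1,\dots,b_{|\eta|}$, where $\sum^{\ast}$ runs over ordered tuples of distinct parent indices. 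I would first record this identity and then decompose $1 - p_{\xi\xi}(t) = \sum_{\eta\neq\xi}p_{\xi\eta}(t)$ according to the type of merger: the $\binom{k}{2}$ \emph{binary} targets, in which exactly one pair of blocks merges, and the remaining \emph{complex} targets, in which at least three blocks merge into one parent or at least two disjoint pairs merge simultaneously.

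For a binary target I would bound $p_{\xi\eta}(t)$ by dropping the distinctness constraint on the $k-2$ unmerged indices, giving $p_{\xi\eta}(t)\leq (N)_{k}^{-1}N^{k-2}\sum_i(\nu_t^{(i)})_2 = \{N^{k-2}(N)_2/(N)_k\}\,c_N(t)$ by the definition \eqref{eq:coalescence_rate} of $c_N(t)$; since $N^{k-2}(N)_2/(N)_k = 1 + O(N^{-1})$ for fixed $k$, summing over the $\binom{k}{2}$ such targets produces the term $\binom{|\xi|}{2}\{1+O(N^{-1})\}c_N(t)$. The complex targets are treated the same way, except that the explicit formula now carries a factor $(\nu_t^{(i)})_3$ (for a triple-or-larger merger) or a product $(\nu_t^{(i)})_2(\nu_t^{(j)})_2$ with $i\neq j$ (for two simultaneous pair mergers). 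Using $(\nu_t^{(i)})_3 \leq (\nu_t^{(i)})_2\nu_t^{(i)}$ and $(\nu_t^{(j)})_2\leq(\nu_t^{(j)})^2$, these are dominated respectively by the two summands appearing in $D_N(t)$, so that after relaxing distinctness each complex target is bounded by $\{1+O(N^{-1})\}D_N(t)$; this is the step where I would invoke the bound on non-binary mergers from \cite[Lemma 1, Case 3]{koskela2018}. Summing over the complex target types, whose number depends only on $k$, yields the constant $B_{|\xi|}$, which is increasing in $|\xi|$ and does not depend on $N$.

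The main obstacle is the complex-merger estimate: one must check that \emph{every} admissible complex configuration, not merely the two prototypes above, can after relaxation be dominated by one of the two terms of $D_N(t)$ with a uniform $\{1+O(N^{-1})\}$ prefactor, and then organise the combinatorial bookkeeping so that the number of complex targets, and hence $B_{|\xi|}$, is controlled by $|\xi|$ alone. The binary estimate, by contrast, is essentially a one-line computation. Combining the two contributions gives $1 - p_{\xi\xi}(t) \leq \binom{|\xi|}{2}\{1+O(N^{-1})\}c_N(t) + B_{|\xi|}\{1+O(N^{-1})\}D_N(t)$, which rearranges to \eqref{eq:removeass4}.
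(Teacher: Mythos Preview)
Your proposal is correct and follows essentially the same approach as the paper: write $1-p_{\xi\xi}(t)$ as the sum of all non-trivial transition probabilities, separate the $\binom{|\xi|}{2}$ isolated binary mergers (bounded by relaxing distinctness to obtain the $c_N(t)$ term), and bound every remaining complex configuration via \cite[Lemma~1, Case~3]{koskela2018} to obtain the $D_N(t)$ term with a combinatorial prefactor depending only on $|\xi|$. The paper carries out exactly these steps, parameterising targets by ordered merger-size vectors $(b_1,\ldots,b_k)$ with explicit multinomial multiplicities, and controls the number of complex types using the Hardy--Ramanujan partition bound; your identification of the complex-merger domination as the main technical obstacle is accurate.
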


\begin{proof}
Let $\kappa_i = |\{ j : b_j = i \}|$ denote the multiplicity of mergers of size $i$, with the slight abuse of terminology that $\kappa_1$ counts non-merger events.
In particular, we have that $\kappa_1 + 2 \kappa_2 + \cdots + | \xi | \kappa_{ | \xi | } = | \xi |$.
Now
\begin{equation*}
p_{ \xi \xi }( t ) = 1 - \frac{ 1 }{ ( N )_{ | \xi | } } \sum_{ k = 1 }^{ | \xi | - 1 } \sum_{ \substack{ b_1 \geq \ldots \geq b_k = 1 \\ b_1 + \ldots + b_k = | \xi | } }^{ | \xi | } \frac{ | \xi |! }{ \prod_{ j = 1 }^{ | \xi | } ( j ! )^{ \kappa_j } \kappa_j ! } \sum_{ \substack{ i_1 \neq \ldots \neq i_k = 1 \\ \text{all distinct} } }^N( \nu_t^{ ( i_1 ) } )_{ b_1 } \ldots ( \nu_t^{ ( i_k ) } )_{ b_k },
\end{equation*}
because the right hand side subtracts the probabilities of all possible merger events.
See \cite[Eq (11)]{fu2006} for the combinatorial factor, which gives the number of partitions of a sequence of length $|\xi|$  having $\kappa_j$ subsequences of length $j$ for each $j$.
The omitted $k = | \xi |$ summand would correspond to the probability of an identity transition.
The non-increasing ordering of $( b_1, \ldots, b_k )$ in the sum is arbitrary, but without loss of generality: choosing any ordering of the same set of merger sizes would give the same result.

Firstly, we separate out the $k = | \xi | - 1$ term, which covers isolated binary mergers, and note that in that case the only possible $b$-vector is $(2, 1, \ldots, 1)$, for which
\begin{equation*}
\frac{ | \xi |! }{ \prod_{ j = 1 }^{ | \xi | } ( j ! )^{ \kappa_j } \kappa_j ! } = \frac{ | \xi |! }{ 2 ! ( | \xi | - 2 ) ! } = \binom{ | \xi | }{ 2 }
\end{equation*}
and a multinomial expansion argument yields
\begin{equation*}
\sum_{ i_1 \neq \ldots \neq i_{ | \xi | - 1 } = 1 }^N ( \nu_t^{ ( i_1 ) } )_2 \nu_t^{ ( i_2 ) } \ldots \nu_t^{ ( i_{ | \xi | - 1 } ) }
\leq N^{ | \xi | - 2 } \sum_{ i = 1 }^N ( \nu_t^{ ( i ) } )_2 .
\end{equation*}
Thus
\begin{align*}
p_{ \xi \xi }( t ) &\geq 1 - \binom{ | \xi | }{ 2 } \frac{ 1 + O( N^{ -1 } ) }{ ( N )_2 } \sum_{ i = 1 }^N ( \nu_t^{ ( i ) } )_2 \\
&\qquad - \frac{ 1 }{ ( N )_{ | \xi | } } \sum_{ k = 1 }^{ | \xi | - 2 } \sum_{ \substack{ b_1 \geq \ldots \geq b_k = 1 \\ b_1 + \ldots + b_k = | \xi | } }^{ | \xi | } \frac{ | \xi |! }{ \prod_{ j = 1 }^{ | \xi | } ( j ! )^{ \kappa_j } \kappa_j ! } \sum_{ \substack{ i_1 \neq \ldots \neq i_k = 1 \\ \text{all distinct} } }^N( \nu_t^{ ( i_1 ) } )_{ b_1 } \ldots ( \nu_t^{ ( i_k ) } )_{ b_k }.
\end{align*}
For the other summands, we have
\begin{equation*}
\frac{ | \xi |! }{ \prod_{ j = 1 }^{ | \xi | } ( j ! )^{ \kappa_j } \kappa_j ! } \leq | \xi | !
\end{equation*}
and, similarly to \cite[Lemma 1, Case 3]{koskela2018},
\begin{align*}
\sum_{ \substack{ i_1 \neq \ldots \neq i_k = 1 \\ \text{all distinct} } }^N &( \nu_t^{ ( i_1 ) } )_{ b_1 } \ldots ( \nu_t^{ ( i_k ) } )_{ b_k } \leq \sum_{ i = 1 }^N ( \nu_t^{ ( i ) } )_2 \Bigg( N^{ | \xi | - 2 } - \sum_{ \substack{ j_1 \neq \ldots \neq j_{ | \xi | - 2 } = 1 \\ \text{all distinct and } \neq i } }^N \nu_t^{ ( j_1 ) } \ldots \nu_t^{ ( j_{ | \xi | - 2 } ) } \Bigg) \\
&\leq \sum_{ i = 1 }^N ( \nu_t^{ ( i ) } )_2 \Bigg\{ N^{ | \xi | - 2 } - ( N - \nu_t^{ ( i ) } )^{ | \xi | - 2 } + \binom{ | \xi | - 2 }{ 2 } \sum_{ j \neq i } ( \nu_t^{ ( j ) } )^2 \Bigg( \sum_{ k \neq i } \nu_t^{ ( k ) } \Bigg)^{ | \xi | - 4 } \Bigg\} \\
&\leq \sum_{ i = 1 }^N ( \nu_t^{ ( i ) } )_2 \Bigg\{ ( | \xi | - 2 ) \nu_t^{ ( i ) } N^{ | \xi | - 3 } + \binom{ | \xi | - 2 }{ 2 } \sum_{ j \neq i } ( \nu_t^{ ( j ) } )^2 N^{ | \xi | - 4 } \Bigg\},
\end{align*}
where the last step uses $(N - x)^b \geq N^b - b x N^{ b - 1 }$ for $x \leq N$, $b \geq 0$.
Overall
\begin{multline*}
p_{ \xi \xi }( t ) \geq 1 - \binom{ | \xi | }{ 2 } \frac{ 1 + O( N^{ -1 } ) }{ ( N )_2 } \sum_{ i = 1 }^N ( \nu_t^{ ( i ) } )_2 \\
{} - \frac{ 1 }{ ( N )_{ | \xi | } } \sum_{ k = 1 }^{ | \xi | - 2 } \sum_{ \substack{ b_1 \geq \ldots \geq b_k = 1 \\ b_1 + \ldots + b_k = | \xi | } }^{ | \xi | } | \xi |! \sum_{ i = 1 }^N ( \nu_t^{ ( i ) } )_2 \Bigg\{ ( | \xi | - 2 ) \nu_t^{ ( i ) } N^{ | \xi | - 3 } + \binom{ | \xi | - 2 }{ 2 } \sum_{ j \neq i } ( \nu_t^{ ( j ) } )^2 N^{ | \xi | - 4 } \Bigg\}.
\end{multline*}
The summand in the third term depends neither on $k$ nor on $b_1, \ldots, b_k$, and the number of terms in those sums is bounded above by $\gamma_{ | \xi | - 2 } ( | \xi | - 2 )$, where $\gamma_n$ is the number of integer partitions of $n$.
By \cite[Section 2]{hardy1918}, $\gamma_n < K e^{ 2 \sqrt{ 2 n } } / n$ for a constant $K > 0$ independent of $n$.
Thus, for $|\xi| > 2$,
\begin{align*}
p_{ \xi \xi }( t ) \geq {}& 1 - \binom{ | \xi | }{ 2 } \frac{ 1 + O( N^{ -1 } ) }{ ( N )_2 } \sum_{ i = 1 }^N ( \nu_t^{ ( i ) } )_2 \\
&\qquad {}- K e^{ 2 \sqrt{ 2 ( | \xi | - 2 ) } } | \xi |! \binom{ | \xi | -1 }{ 2 } \frac{ N^{ | \xi | - 3 } }{ ( N )_{ | \xi | } } \sum_{ i = 1 }^N ( \nu_t^{ ( i ) } )_2 \Bigg\{ \nu_t^{ ( i ) } + \frac{ 1 }{ N } \sum_{ j \neq i } ( \nu_t^{ ( j ) } )^2 \Bigg\} \\
={} & 1 - \binom{ | \xi | }{ 2 } \{ 1 + O( N^{ -1 } ) \} c_N( t ) - B_{ | \xi | } \{ 1 + O( N^{ -1 } ) \} D_N( t ) ,
\end{align*}
where $B_{ | \xi | } > 0$ depends on $| \xi |$ but not on $N$. 
When $|\xi| \leq 2$, there are no higher order interactions and the result is immediate.
\end{proof}
Using \eqref{eq:removeass4} in place of
the lower bound used in the proof of \cite[Theorem 1]{koskela2018} facilitates a modification of that proof such that \eqref{eq:oldass4} is not needed.
Details of this argument are in the following subsection.

\subsection{Proof of Theorem \ref{thm:generalconv} without Assumption \eqref{eq:oldass4}}\label{sec:proof}

The proof of \cite[Theorem 1]{koskela2018} proceeds in three parts.
The first is a vanishing upper bound on finite-dimensional distributions of the genealogical process when the path of the process involves either multiple simultaneous mergers or any merger involving more than two particles.
The second is showing that the finite-dimensional distributions of the $n$-coalescent upper bound those of the genealogical process in the limit $N \rightarrow \infty$ when the path of the genealogy consists of only isolated binary mergers.
The final piece is a similar lower bound, which establishes convergence of the finite-dimensional distributions.
Only the third part makes use of assumption \eqref{eq:oldass4}.
Hence, it suffices to show that Lemma \ref{lem:removeass4} can be used to obtain the same lower bound without making use of \eqref{eq:oldass4}.

\begin{proof}
Let $\chi^\star_d$ be the conditional transition probability of a transition from state $\eta_{d-1}$ to state $\eta_d$ at times $\tau_N(t_{d-1})$ and $\tau_N(t_d)$ respectively, conditional on the offspring counts between those times $\nu^{(1:N)}_{\tau_N(d-1) +1} , \dots, \nu^{(1:N)}_{\tau_N(d)}$. 
This transition can happen via any valid path of merger events, but we restrict to paths involving binary mergers only, and denote by $\chi_d$ the conditional transition probability subject to this restriction.
Considering the Proof of Theorem 1 in \cite{koskela2018}, the derivation of an upper bound on $\chi_d$ holds without modification in our context; while the first step in the derivation of a lower bound which can be found on page 14 of that work involves the application of its Lemma 1 to bound $\chi_d$ from below.
Instead, we apply Lemma \ref{lem:removeass4} to obtain
\begin{align*}
\chi_d &\geq \sum_{ s_1 < \ldots < s_{ \alpha } = \tau_N( t_{ d - 1 } ) + 1 }^{ \tau_N( t_d ) } ( \tilde{ Q }^{ \alpha } )_{ \eta_{ d - 1 } \eta_d } \Bigg( \prod_{ r = 1 }^{ \alpha } \Bigg[ c_N( s_r ) - \binom{ n - 2 }{ 2 } \{ 1+ O( N^{ -1 } ) \} D_N( s_r ) \Bigg] \Bigg) \\
&\qquad\qquad\phantom{\geq} \times \prod_{ \substack{ r = \tau_N( t_{ d - 1 } ) + 1 \\ r \neq s_1, \ldots, r \neq s_{ \alpha } } }^{ \tau_N( t_d ) } \Bigg[ 1 - B_n \{ 1 + O( N^{ -1 } ) \} D_N( r ) \\
&\qquad\qquad\qquad\phantom{ \geq \times \prod_{ \substack{ r = \tau_N( t_{ d - 1 } ) + 1 \\ r \neq s_1, \ldots, r \neq s_{ \alpha } } }^{ \tau_N( t_d ) } \Bigg[ 1 } - \binom{ | \eta_{ d - 1 } | - | \{ i : s_i < r \} | }{ 2 } \{ 1 + O( N^{ -1 } ) \} c_N( r ) \Bigg] .
\end{align*}
Here $\tilde{Q}$ is the matrix obtained from the generator $Q$ of Kingman's $n$-coalescent (see Definition \ref{def:kingman}) by setting the diagonal entries to 0.
The number of pair-merger steps required to transition from $\eta_{d-1}$ to $\eta_d$ is $\alpha = |\eta_{d-1}| - |\eta_d|$. The sequences $s_1,\dots,s_\alpha$ denote the times at which these pair-mergers happen. 
At the remaining times $r$ the partition is unchanged, and the bound in Lemma \ref{lem:removeass4} has been applied to the one-step transition probabilities corresponding to these identity transtions. The constant $B_n$ is that appearing in Lemma \ref{lem:removeass4}, where we replace $|\eta_d|$ by its upper bound $n$.
A sum over an index vector of length zero should be interpreted as the identity operator here and in the following.

The rest of the proof proceeds as in \cite{koskela2018}, albeit from this modified initial lower bound.
A multinomial expansion of the product on the second line yields 
\begin{align*}
\chi_d \geq {}& \sum_{ \beta = 0 }^{ \tau_N( t_d ) - \tau_N( t_{ d - 1 } ) - \alpha } ( \tilde{ Q }^{ \alpha } )_{ \eta_{ d - 1 } \eta_d } \sum_{\substack{ ( \lambda, \mu ) \in \Pi_2( [ \alpha + \beta ] ) :\\ | \lambda | = \alpha }} \{ 1 + O( N^{ -1 } ) \}^{ \beta } \\
&\times \sum_{ s_1 < \ldots < s_{ \alpha + \beta } = \tau_N( t_{ d - 1 } ) + 1 }^{ \tau_N( t_d ) } \Bigg( \prod_{ r \in \lambda } \Bigg[ c_N( s_r ) - \binom{ n - 2 }{ 2 } \{ 1 + O( N^{ -1 } ) \} D_N( s_r ) \Bigg] \Bigg)\\
&\times \prod_{ r \in \mu } \Bigg\{ - \binom{ | \eta_{ d - 1 } | - | \{ i \in \lambda : i < r \} | }{ 2 } c_N( s_r ) - B_n D_N( s_r ) \Bigg\}
\end{align*}
where $\Pi_i([n])$ denotes the set of partitions of $\{1, \dots,n\}$ into exactly $i$ blocks.
Expanding the product over $\lambda$ gives
\begin{align*}
\chi_d &\geq \sum_{ \beta = 0 }^{ \tau_N( t_d ) - \tau_N( t_{ d - 1 } ) - \alpha } ( \tilde{ Q }^{ \alpha } )_{ \eta_{ d - 1 } \eta_d } \sum_{\substack{ ( \lambda, \mu, \pi ) \in \Pi_3( [ \alpha + \beta ] ) :\\ | \mu | = \beta }} \binom{ n - 2 }{ 2 }^{ | \pi | } ( -1 )^{ | \pi | } \{ 1 + O( N^{ -1 } ) \}^{ \beta + | \pi | } \\
&\phantom{\geq} \times \sum_{ s_1 < \ldots < s_{ \alpha + \beta } = \tau_N( t_{ d - 1 } ) + 1 }^{ \tau_N( t_d ) } \Bigg\{ \prod_{ r \in \lambda } c_N( s_r ) \Bigg\} \Bigg\{ \prod_{ r \in \pi }  D_N( s_r ) \Bigg\} \\
&\phantom{\geq} \times \prod_{ r \in \mu } \Bigg\{ - \binom{ | \eta_{ d - 1 } | - | \{ i \in \lambda \cup \pi : i < r \} | }{ 2 } c_N( s_r ) - B_n D_N( s_r ) \Bigg\}
\end{align*}
and expanding the product over $\mu$ results in
\begin{align*}
\chi_d &\geq \sum_{ \beta = 0 }^{ \tau_N( t_d ) - \tau_N( t_{ d - 1 } ) - \alpha } ( \tilde{ Q }^{ \alpha } )_{ \eta_{ d - 1 } \eta_d } \sum_{\substack{ ( \lambda, \mu, \pi, \sigma ) \in \Pi_4( [ \alpha + \beta ] ) :\\ | \mu | + | \sigma | = \beta }} B_n^{ | \sigma | } \binom{ n - 2 }{ 2 }^{ | \pi | } ( -1 )^{ | \pi | + | \sigma | } \\
&\phantom{\geq} \times \{ 1 + O( N^{ -1 } ) \}^{ \beta + | \pi | } \Bigg\{ \prod_{ r \in \mu } - \binom{ | \eta_{ d - 1 } | - | \{ i \in \lambda \cup \pi : i < r \} | }{ 2 } \Bigg\} \\
&\phantom{\geq} \times \sum_{ s_1 < \ldots < s_{ \alpha + \beta } = \tau_N( t_{ d - 1 } ) + 1 }^{ \tau_N( t_d ) } \Bigg\{ \prod_{ r \in \lambda \cup \mu } c_N( s_r ) \Bigg\} \prod_{ r \in \pi \cup \sigma }  D_N( s_r ) .
\end{align*}
Via a further multinomial expansion, the lower bound for the $k$-step transition probability can be written as
\begin{align*}
\lim_{ N \rightarrow \infty } \E\left[ \prod_{ d = 1 }^k \chi_d \right]
&\geq \lim_{ N \rightarrow \infty } \E\Bigg[ \sum_{ \beta_1 = 0 }^{ \infty } \mkern-6mu \ldots \mkern-6mu \sum_{ \beta_k = 0 }^{ \infty } \sum_{\substack{ ( \lambda_1, \mu_1, \pi_1, \sigma_1 ) \in \Pi_4( [ \alpha_1 + \beta_1 ] ):\\  | \mu_1 | + | \sigma_1 | = \beta_1 }} \ldots 
\sum_{\substack{ ( \lambda_k, \mu_k, \pi_k, \sigma_k ) \in \Pi_4( [ \alpha_k + \beta_k ] ) :\\ | \mu_k | + | \sigma_k | = \beta_k }} \\
&\phantom{\geq} B_n^{ \sum_{ d = 1 }^k | \sigma_d | } \binom{ n - 2 }{ 2 }^{ \sum_{ d = 1 }^k| \pi_d | }
( -1 )^{ \sum_{ d = 1 }^k | \pi_d | + | \sigma_d | }  \{ 1 + O( N^{ -1 } ) \}^{ | \bm{ \beta } | + \sum_{ d = 1 }^k | \pi_d | } \\
&\phantom{\geq} \times \Bigg\{ \prod_{ d = 1 }^k ( \tilde{ Q }^{ \alpha_d } )_{ \eta_{ d - 1 } \eta_d } \prod_{ r \in \mu_d } - \binom{ | \eta_{ d - 1 } | - | \{ i \in \lambda_d \cup \pi_d : i < r \} | }{ 2 } \Bigg\} \\
&\phantom{\geq} \times \sum_{ s_1^{ ( 1 ) } < \ldots < s_{ \alpha_1 + \beta_1 }^{ ( 1 ) } = \tau_N( t_0 ) + 1 }^{ \tau_N( t_1 ) } \ldots \sum_{ s_1^{ ( k ) } < \ldots < s_{ \alpha_k + \beta_k }^{ ( k ) } = \tau_N( t_{ k - 1 } ) + 1 }^{ \tau_N( t_k ) } \\
&\phantom{\geq} \prod_{ d = 1 }^k \mathds{ 1 }_{ \{ \tau_N( t_d ) - \tau_N( t_{ d - 1 } ) \geq \alpha_d + \beta_d \} } \Bigg\{ \prod_{ r \in \lambda_d \cup \mu_d } c_N( s_r^{ ( d ) } ) \Bigg\} \prod_{ r \in \pi_d \cup \sigma_d }  D_N( s_r^{ ( d ) } ) \Bigg].
\end{align*}
An argument completely analogous to that in \cite[Appendix]{koskela2018} shows that passing the expectation and the limit through the infinite sums is justified, whereupon the contribution of terms with $ \sum_{ d = 1 }^k ( | \pi_d | + | \sigma_d | ) > 0$ vanishes.
To see why, follow the argument used to show that the contribution of multiple merger trajectories vanishes in the corresponding upper bound in \cite{koskela2018}.
That leaves
\begin{align}
\lim_{ N \rightarrow \infty } \E\left[ \prod_{ d = 1 }^k \chi_d \right] 
&\geq \sum_{ \beta_1 = 0 }^{ \infty } \ldots \sum_{ \beta_k = 0 }^{ \infty } \sum_{\substack{ ( \lambda_1, \mu_1 ) \in \Pi_2( [ \alpha_1 + \beta_1 ] ) :\\ | \mu_1 | = \beta_1 }} \ldots \sum_{\substack{ ( \lambda_k, \mu_k ) \in \Pi_2( [ \alpha_k + \beta_k ] ) :\\ | \mu_k | = \beta_k }} \nonumber \\
&\phantom{\geq} \Bigg\{ \prod_{ d = 1 }^k ( \tilde{ Q }^{ \alpha_d } )_{ \eta_{ d - 1 } \eta_d } \prod_{ r \in \mu_d } - \binom{ | \eta_{ d - 1 } | - | \{ i \in \lambda_d \cup \pi_d : i < r \} | }{ 2 } \Bigg\} \nonumber \\
&\phantom{\geq} \times \lim_{ N \rightarrow \infty } \E\Bigg[ \sum_{ s_1^{ ( 1 ) } < \ldots < s_{ \alpha_1 + \beta_1 }^{ ( 1 ) } = \tau_N( t_0 ) + 1 }^{ \tau_N( t_1 ) } \ldots \sum_{ s_1^{ ( k ) } < \ldots < s_{ \alpha_k + \beta_k }^{ ( k ) } = \tau_N( t_{ k - 1 } ) + 1 }^{ \tau_N( t_k ) } \nonumber \\
&\phantom{\geq} \prod_{ d = 1 }^k \mathds{ 1 }_{ \{ \tau_N( t_d ) - \tau_N( t_{ d - 1 } ) \geq \alpha_d + \beta_d \} } \Bigg\{ \prod_{ r \in \lambda_d \cup \mu_d } c_N( s_r^{ ( d ) } ) \Bigg\} \Bigg]. \label{eq1}
\end{align}
Recall \cite[Eq (11)]{koskela2018}:
\begin{align*}
\sum_{\substack{ ( \lambda, \mu ) \in \Pi_2( [ \alpha + \beta ] ) :\\ | \mu | = \beta }} ( \tilde{ Q }^{ \alpha } )_{ \eta_{ d - 1 } \eta_d } \prod_{ r \in \mu } - \binom{ | \eta_{ d - 1 } | - | \{ i \in \lambda \cup \pi : i < r \} | }{ 2 } = ( Q^{ \alpha + \beta } )_{ \eta_{ d - 1 } \eta_d } .
\end{align*}
Applying this $k$ times in \eqref{eq1} yields
\begin{align*}
\lim_{ N \rightarrow \infty } \E\left[ \prod_{ d = 1 }^k \chi_d \right]
&\geq \sum_{ \beta_1 = 0 }^{ \infty } \ldots \sum_{ \beta_k = 0 }^{ \infty } \Bigg\{ \prod_{ d = 1 }^k ( Q^{ \alpha_d + \beta_d } )_{ \eta_{ d - 1 } \eta_d } \Bigg\} \\
&\phantom{\geq} \times \lim_{ N \rightarrow \infty } \E\Bigg[ \Bigg( \prod_{ d = 1 }^k \mathds{ 1 }_{ \{ \tau_N( t_d ) - \tau_N( t_{ d - 1 } ) \geq \alpha_d + \beta_d \} } \Bigg) \sum_{ s_1^{ ( 1 ) } < \ldots < s_{ \alpha_1 + \beta_1 }^{ ( 1 ) } = \tau_N( t_0 ) + 1 }^{ \tau_N( t_1 ) } \\
&\phantom{\geq \times \lim_{ N \rightarrow \infty } \E\Bigg[} \ldots \sum_{ s_1^{ ( k ) } < \ldots < s_{ \alpha_k + \beta_k }^{ ( k ) } = \tau_N( t_{ k - 1 } ) + 1 }^{ \tau_N( t_k ) } \prod_{ d = 1 }^k \prod_{ r \in \lambda_d \cup \mu_d } c_N( s_r^{ ( d ) } ) \Bigg].
\end{align*}
We now apply equations (14) and (15), respectively, of \cite{koskela2018},  to those terms with a negative ($|\bm{\beta}|$ odd) and positive ($|\bm{\beta}|$ even) sign, respectively, and obtain
\begin{align*}
\lim_{ N \rightarrow \infty } \E\left[ \prod_{ d = 1 }^k \chi_d \right]
&\geq \sum_{ \beta_1 = 0 }^{ \infty } \ldots \sum_{ \beta_k = 0 }^{ \infty } \Bigg\{ \prod_{ d = 1 }^k ( Q^{ \alpha_d + \beta_d } )_{ \eta_{ d - 1 } \eta_d } \frac{ ( t_d - t_{ d - 1 } )^{ \alpha_d + \beta_d } }{ ( \alpha_d + \beta_d ) ! } \Bigg\} \\
&\phantom{\geq} \times \lim_{ N \rightarrow \infty } \E\left[ \prod_{ d = 1 }^k \mathds{ 1 }_{ \{ \tau_N( t_d ) - \tau_N( t_{ d - 1 } ) \geq \alpha_d + \beta_d \} } \right].
\end{align*}
An invocation of \cite[Eq (16)]{koskela2018} concludes the proof.
\end{proof}

\section{Illustrative Applications}\label{sec:applications}
\subsection{Resampling with stochastic roundings}
\begin{definition}\label{defn:stochround}
 Let $X=(X_1,\dots,X_N)$ be a $\mathbb{R}_+^N$-valued random variable. Then $Y=(Y_1,\dots,Y_N) \in \mathbb{N}^N$ is a \emph{stochastic rounding} of $X$ if each element $Y_i$ takes values
\begin{equation*}
Y_i \mid X =
\begin{cases}
 \lfloor X_i \rfloor & \text{with probability } 1- X_i+ \lfloor X_i \rfloor \\
  \lfloor X_i \rfloor +1 & \text{with probability } X_i- \lfloor X_i \rfloor .
\end{cases}
\end{equation*}
\end{definition}

By construction, $\E[Y_i \mid X] = X_i$ for each $i$. Taking $X$ to be $N$ times the vector of particle weights, we can therefore use stochastic rounding for the \textsc{resample} procedure in Algorithm \ref{alg:SMC}, under the further constraint that $Y_1 + \dots + Y_N = N$. Several ways to enforce this constraint have been proposed, including systematic resampling \cite{carpenter1999, whitley1994}, residual resampling with stratified or systematic residuals \cite{whitley1994}, the branching system of \cite{crisan1997}, and the Srinivasan sampling process resampling introduced in \cite{gerber2017}.

\begin{corollary}\label{thm:stochrounding}
Consider an SMC algorithm using any stochastic rounding as its resampling scheme, such that the standing assumption is satisfied.
Assume that there exists a constant $a\in [1,\infty)$ such that for all $x, x^\prime, t$,
\begin{equation}\label{eq:gq_bounds_sr}
\frac{1}{a} \leq g_t(x, x^\prime) \leq a . 
\end{equation}
Assume that $\Prob[ \tau_N(t) = \infty] =0$ for all finite $t$.
Let $(G_t^{(n,N)})_{t\geq0}$ denote the genealogy of a random sample of $n$ terminal particles from the output of the algorithm when the total number of particles used is $N$. Then, for any fixed $n$, the time-scaled genealogy $(G_{\tau_N(t)}^{(n,N)})_{t\geq0}$ converges to Kingman's $n$-coalescent as $N\to \infty$, in the sense of finite-dimensional distributions.
\end{corollary}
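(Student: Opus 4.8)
The plan is to apply Theorem \ref{thm:generalconv} directly. The standing assumption and the finiteness condition $\Prob[\tau_N(t)=\infty]=0$ are already hypotheses of the corollary, so the only thing left to verify is the moment condition \eqref{eq:mainthmcond}: I must exhibit a deterministic sequence $(b_N)$ with $b_N \to 0$ that dominates the ratio of summed third to second falling-factorial moments of the offspring counts, uniformly in $t \geq 1$.

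The key structural observation is that under stochastic rounding each offspring count takes only one of two values. Applying Definition \ref{defn:stochround} to $X = N w_t^{(1:N)}$ gives $\nu_t^{(i)} \in \{\flnw, \flnw+1\}$ almost surely, so in particular $\nu_t^{(i)} \leq N w_t^{(i)} + 1$. The bounded-potential assumption \eqref{eq:gq_bounds_sr} then converts this into a bound that is uniform in $N$: since the normalized weights satisfy $w_t^{(i)} \leq a/(N/a) = a^2/N$, we obtain $\nu_t^{(i)} \leq a^2 + 1$ almost surely, uniformly in $i$, $t$, and $N$. I expect this step --- recognising that boundedness of the potentials forces the offspring counts to be bounded by a constant independent of $N$ --- to be the crux of the argument; everything after it is bookkeeping.

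Given this uniform bound, I would exploit the factorisation $(\nu_t^{(i)})_3 = (\nu_t^{(i)})_2(\nu_t^{(i)}-2)$. Since $(\nu_t^{(i)})_2 \geq 0$ and $\nu_t^{(i)} - 2 \leq a^2 - 1$, this yields the pointwise (hence conditional-expectation-preserving) inequality $(\nu_t^{(i)})_3 \leq (a^2-1)(\nu_t^{(i)})_2$ almost surely. Summing over $i$, taking filtered expectations, and using $(N)_3 = (N-2)(N)_2$ gives
\[
\frac{1}{(N)_3}\sum_{i=1}^N \Et\left[(\nu_t^{(i)})_3\right] \leq \frac{a^2-1}{N-2}\cdot\frac{1}{(N)_2}\sum_{i=1}^N \Et\left[(\nu_t^{(i)})_2\right],
\]
so that \eqref{eq:mainthmcond} holds with $b_N = (a^2-1)/(N-2)$. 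As $a$ is a fixed constant this sequence tends to $0$, and the bound is uniform in $t \geq 1$ because neither the offspring-count bound nor the factorisation depends on $t$. Theorem \ref{thm:generalconv} then delivers convergence of $(G_{\tau_N(t)}^{(n,N)})_{t\geq 0}$ to Kingman's $n$-coalescent in the sense of finite-dimensional distributions, completing the proof.
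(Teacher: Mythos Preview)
Your argument is correct, and in fact it is cleaner than the paper's own proof. Both arguments hinge on the observation that under \eqref{eq:gq_bounds_sr} stochastic rounding forces the almost-sure bound $\nu_t^{(i)} \leq a^2+1$, but they exploit it differently. The paper introduces the auxiliary $\sigma$-algebra $\mathcal{H}_t$ of \eqref{eq:defn_Ht}, invokes the D-separation of Appendix~\ref{app:dseparation} to decouple $\nu_t^{(1:N)}$ from $\mathcal{F}_{t-1}$, and then bounds $\E[(\nu_t^{(i)})_3 \mid \mathcal{H}_t]$ above and $\E[(\nu_t^{(i)})_2 \mid \mathcal{H}_t]$ below separately, each by a multiple of the indicator $\1{\flnw \geq 2}$; the common indicator cancels after summing and taking the ratio, yielding $b_N = (a^2+1)^3/\{2(N-2)\}$. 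You instead use the pointwise factorisation $(\nu_t^{(i)})_3 = (\nu_t^{(i)})_2(\nu_t^{(i)}-2) \leq (a^2-1)(\nu_t^{(i)})_2$, which survives \emph{any} conditional expectation and therefore bypasses the $\mathcal{H}_t$/tower-property machinery entirely, giving the sharper $b_N = (a^2-1)/(N-2)$. The paper's route has the advantage of setting up the conditioning framework that is genuinely needed elsewhere (Lemma~\ref{thm:SR_nontriviality} and Corollary~\ref{thm:CSMC_newassns}), but for this corollary in isolation your direct pointwise argument is both shorter and yields a better constant.
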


Condition \eqref{eq:gq_bounds_sr} is strong, but is widespread in the literature.
By contrast, the assumption that the time-scale does not explode is a technicality, which holds under an easier-to-verify condition when the transition densities are bounded above and below (see Lemma~\ref{thm:SR_nontriviality}). We conjecture that in general bounded transition densities are not necessary in the case of stochastic rounding.

\begin{remark}
In a similar vein to \cite[Remark 3]{koskela2018}, if we consider the weight vector as fixed, the time scale induced by stochastic rounding is slower than that induced by multinomial resampling. Details are given in Appendix~\ref{app:timescales}.
\end{remark}

\begin{remark}
Since every stochastic rounding has the same marginal distributions and the first moment of \eqref{eq:coalescence_rate} depends only on marginal family size distributions, the expected coalescence rate is the same whichever stochastic rounding is used for resampling. Thus the time-scale on which the $n$-coalescent is recovered is equal in expectation for every such scheme.
\end{remark}

\begin{proof}
Using the forward-time Markov property of SMC, and the associated conditional dependence graph, for each $N$ we establish a sequence of $\sigma$-algebras
\begin{equation}\label{eq:defn_Ht}
\mathcal{H}_t := \sigma(X_{t-1}^{(1:N)}, X_t^{(1:N)}, w_{t-1}^{(1:N)}, w_t^{(1:N)} )
\end{equation}
such that $\nu_t^{(1:N)}$ is conditionally independent of the filtration $\mathcal{F}_{t-1}$ given $\mathcal{H}_t$. The full D-separation argument is presented in Appendix~\ref{app:dseparation}.

Defining the family sizes $\nu_t^{(i)} = |\{ j : a_t^{(j)} = i \}|$ as functions of $a_t^{(1:N)}$, we have the almost sure constraint $\nu_t^{(i)} \in \{\flnw, \flnw +1\}$.  Denote $p_0^{(i)} := \Prob[ \nu_t^{(i)} = \flnw \mid \mathcal{H}_t ]$ and $p_1^{(i)} := \Prob[ \nu_t^{(i)} = \flnw +1 \mid \mathcal{H}_t ] = 1-p_0^{(i)}$. 

We obtain the following upper bounds, using the almost sure bounds $w_t^{(i)} \leq a^2/N$ which follow from \eqref{eq:gq_bounds_sr} along with the form of the weights in Algorithm \ref{alg:SMC}:
\begin{align*}
\E[(\nu_t^{(i)})_3 \mid \mathcal{H}_t] &= p_0^{(i)} (\flnw)_3 + p_1^{(i)} (\flnw + 1)_3 \\
&= \flnw (\flnw -1) \{ p_0^{(i)} (\flnw -2) + p_1^{(i)} (\flnw +1) \} \\
&= \flnw (\flnw -1) \{ \flnw (p_0^{(i)} + p_1^{(i)}) -2 p_0^{(i)} + p_1^{(i)} \} \\
&= \flnw (\flnw -1) \{ \flnw -2 p_0^{(i)} + p_1^{(i)} \} \\
&\leq a^2 (a^2 -1) (a^2 -0 +1) \1{\flnw \geq 2} \\
&\leq (a^2 +1)^3 \1{\flnw \geq 2} .
\end{align*}
We also have the lower bounds
\begin{align*}
\E[(\nu_t^{(i)})_2 \mid \mathcal{H}_t] &= p_0^{(i)} (\flnw)_2 + p_1^{(i)} (\flnw + 1)_2 \\
&= \flnw \{ p_0^{(i)} (\flnw -1) + p_1^{(i)} (\flnw +1) \} \\
&= \flnw \{ \flnw (p_0^{(i)} + p_1^{(i)}) - p_0^{(i)} + p_1^{(i)} \} \\
&= \flnw \{ \flnw - p_0^{(i)} + p_1^{(i)} \} \\
&\geq 2 (2 -1 +0) \1{\flnw \geq 2} = 2 \1{\flnw \geq 2} .
\end{align*}
Applying the tower property and conditional independence,
\begin{align*}
\frac{1}{(N)_2} \sum_{i=1}^N \Et [( \nu_t^{(i)} )_2 ] 
&= \frac{1}{(N)_2} \Et\left[ \sum_{i=1}^N \E\left[ (\nu_t^{(i)})_2 \mid \mathcal{H}_t, \mathcal{F}_{t-1} \right] \right] \\
&= \frac{1}{(N)_2} \Et\left[ \sum_{i=1}^N \E\left[ (\nu_t^{(i)})_2 \mid \mathcal{H}_t \right] \right]
\geq \frac{1}{(N)_2} 2 \Et\left[ |\{i:\flnw\geq 2\}| \right] 
\end{align*}
and similarly
\begin{align*}
\frac{1}{(N)_3} \sum_{i=1}^N \Et [( \nu_t^{(i)} )_3 ] 
&\leq \frac{1}{(N)_3}  (a^2+1)^3 \Et\left[ |\{i:\flnw\geq 2\}| \right] \\
&\leq b_N  \frac{1}{(N)_2} \sum_{i=1}^N \Et [( \nu_t^{(i)} )_2 ]
\end{align*}
where
\begin{equation*}
b_N := \frac{1}{N-2}\frac{(a^2+1)^3}{2} \underset{N\to\infty}{\longrightarrow} 0
\end{equation*}
is independent of $\mathcal{F}_\infty$, satisfying \eqref{eq:mainthmcond}.
The result follows by applying Theorem \ref{thm:generalconv}.
\end{proof}

\subsection{Conditional sequential Monte Carlo updates}

Conditional SMC differs from Algorithm \ref{alg:SMC} in that one predetermined trajectory is conditioned to survive all of the resampling steps. 
We refer to this sequence of positions as the \emph{immortal trajectory}, and the \emph{immortal particle} is the particle in a particular generation that is part of the immortal trajectory.
Conditional SMC was introduced as a component of the particle Gibbs algorithm \cite{andrieu2010} but has found somewhat wider application in fields as diverse as smoothing \cite{jacob19,shestopaloff19} and optimization \cite[Chapter 6]{finke15}. 

In particle Gibbs, the immortal trajectory $x_{0:T}^\star$ at each time step is sampled from the output of the previous conditional SMC run. It is therefore important that with high probability at least two distinct lineages survive each run so that the immortal trajectory can be updated.
A single run of conditional SMC with multinomial resampling is presented in Algorithm \ref{alg:condSMC}.

\vspace{10pt}
\begin{algorithm}[H]
\DontPrintSemicolon
\KwIn{$N, T, \mu, (K_t)_{t=1}^T, (g_t)_{t=0}^T, x_{0:T}^\star, a_{0:T}^\star$}
Set $X_0^{(a_0^\star)} \gets x_0^\star$\;
\lFor{$i \in \{1,\dots,N\} \setminus a_0^\star$}{
	Sample $X_0^{(i)} \sim \mu(\cdot)$
}
\lFor{$i \in \{1,\dots, N\}$}{
		$w_{0}^{(i)} \gets  \left\{ {\sum_{j=1}^N g_0(X_0^{(j)})}\right \}^{-1}{g_0(X_0^{(i)})} $
}                
\For{$t \in \{0,\dots, T-1\}$}{
	Set $a_t^{(a_{t+1}^\star)} \gets a_t^\star$, 	$X_{t+1}^{(a_{t+1}^\star)} \gets x_{t+1}^\star$\;
	\lFor{$i \in \{1,\dots, N\} \setminus a_{t+1}^\star $}{
        Sample $a_t^{(i)} \sim \operatorname{Categorical}(\{1,\dots,N\}, w_t^{(1:N)})$
    }
	\lFor{$i \in \{1,\dots,N\}$}{
		Sample $X_{t+1}^{(i)} \sim K_{t+1}(X_t^{(a_t^{(i)})}, \cdot)$
	}
    \lFor{$i \in \{1,\dots, N\}$}{	
		$w_{t+1}^{(i)} \gets \Big\{ {\sum_{j=1}^Ng_{t+1}(X_t^{(a_t^{(j)})},X_{t+1}^{(j)}) }\Big\}^{-1} g_{t+1}(X_t^{(a_t^{(i)})},X_{t+1}^{(i)}) $
	}          
}
\caption{Conditional sequential Monte Carlo with multinomial resampling (forwards in time)}
\label{alg:condSMC}
\end{algorithm}
\vspace{10pt}

Although it is also possible to construct a conditional SMC algorithm using a low-variance resampling scheme, here we treat only the case of multinomial resampling. We believe that the result can be extended to other resampling schemes by similar arguments to those of Corollary \ref{thm:stochrounding}.

\begin{corollary}\label{thm:CSMC_newassns}
Consider a conditional SMC algorithm using multinomial resampling, such that the standing assumption is satisfied. Assume there exist constants $\varepsilon\in (0,1], a\in [1,\infty)$ and probability density $h$ such that for all $x, x^\prime, t$,
\begin{equation}\label{eq:gq_bounds_csmc}
\frac{1}{a} \leq g_t(x, x^\prime) \leq a , \quad
\varepsilon h(x^\prime) \leq q_t(x, x^\prime) \leq \frac{1}{\varepsilon} h(x^\prime) .
\end{equation}
Let $(G_t^{(n,N)})_{t\geq0}$ denote the genealogy of a random sample of $n$ terminal particles from the output of the algorithm when the total number of particles used is $N$. Then, for any fixed $n$, the time-scaled genealogy $(G_{\tau_N(t)}^{(n,N)})_{t\geq0}$ converges to Kingman's $n$-coalescent as $N\to \infty$, in the sense of finite-dimensional distributions.
\end{corollary}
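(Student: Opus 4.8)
The plan is to follow the template of Corollary~\ref{thm:stochrounding}: verify the moment inequality \eqref{eq:mainthmcond}, confirm the nontriviality $\Prob[\tau_N(t)=\infty]=0$, and then appeal to Theorem~\ref{thm:generalconv}. The one genuinely new ingredient is the immortal trajectory. Reading Algorithm~\ref{alg:condSMC}, at each step exactly one offspring is pinned to the immortal parent $m:=a_t^\star$, while the other $N-1$ offspring draw their parents independently from $\mathrm{Categorical}(w_t^{(1:N)})$. Thus, conditional on a $\sigma$-algebra $\mathcal{H}_t$ carrying the time-$(t-1)$ and time-$t$ positions, weights, and the immortal index, the offspring counts satisfy $\nu_t^{(m)}=1+M_m$ and $\nu_t^{(i)}=M_i$ for $i\neq m$, where $(M_1,\dots,M_N)\sim\mathrm{Multinomial}(N-1,w_t^{(1:N)})$. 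As in Corollary~\ref{thm:stochrounding}, the bound $1/a\le g_t\le a$ together with the form of the weights in Algorithm~\ref{alg:condSMC} yields the almost-sure bound $w_t^{(i)}\le a^2/N$ for every $i$, including the immortal index.

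First I would compute the conditional factorial moments using the multinomial identity $\E[(M_i)_k\mid\mathcal{H}_t]=(N-1)_k(w_t^{(i)})^k$ together with the shift $\nu_t^{(m)}=1+M_m$. The immortal particle contributes only strictly lower-order terms: $\E[(\nu_t^{(m)})_2\mid\mathcal{H}_t]$ acquires an extra $2(N-1)w_t^{(m)}$ and $\E[(\nu_t^{(m)})_3\mid\mathcal{H}_t]$ an extra $3(N-1)_2(w_t^{(m)})^2$ relative to a pure multinomial. Summing over $i$ and dividing by $(N)_3$ and $(N)_2$ respectively, the leading terms are proportional to $\sum_i(w_t^{(i)})^3$ and $\sum_i(w_t^{(i)})^2$. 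Applying $w_t^{(i)}\le a^2/N$ gives $\sum_i(w_t^{(i)})^3\le (a^2/N)\sum_i(w_t^{(i)})^2$ and dominates each immortal correction likewise, so that \eqref{eq:mainthmcond} holds pointwise in $\mathcal{H}_t$ with $b_N$ of order $a^2/N\to0$. Taking $\Et[\cdot]$ of both sides and collapsing $\E[\cdot\mid\mathcal{H}_t,\mathcal{F}_{t-1}]$ to $\E[\cdot\mid\mathcal{H}_t]$ through the conditional independence $\nu_t^{(1:N)}\perp\mathcal{F}_{t-1}\mid\mathcal{H}_t$ then produces \eqref{eq:mainthmcond} in the filtered form required by the theorem, exactly as in the stochastic-rounding proof.

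Two supporting facts remain. For nontriviality, note that Cauchy--Schwarz gives $\sum_i(w_t^{(i)})^2\ge 1/N$, so the computation above bounds the conditional pair-merger rate uniformly away from $0$; since $c_N(t)\in[0,1]$, a standard martingale (conditional Borel--Cantelli) argument then forces $\sum_r c_N(r)=\infty$ almost surely and hence $\Prob[\tau_N(t)=\infty]=0$. The conditional independence underpinning the tower-property step is the more delicate point, and is where I expect the transition-density bounds $\varepsilon h\le q_t\le h/\varepsilon$ of \eqref{eq:gq_bounds_csmc} to be used: because the immortal positions $x_{0:T}^\star$ are frozen across every generation, a D-separation argument on the conditional dependence graph of Algorithm~\ref{alg:condSMC} (deferred to the appendix, paralleling Appendix~\ref{app:dseparation}) must confirm that conditioning on $\mathcal{H}_t$ genuinely decouples $\nu_t^{(1:N)}$ from the remainder of the filtration despite these long-range links.

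I expect the main obstacle to be this conditional-independence/D-separation step rather than the moment algebra, which is a routine modification of Corollary~\ref{thm:stochrounding}. The frozen immortal trajectory is the feature that distinguishes conditional SMC from the unconditional case and the likeliest reason the extra hypothesis on $q_t$ in \eqref{eq:gq_bounds_csmc} appears; once the conditional independence and the nontriviality are secured, \eqref{eq:mainthmcond} delivers the hypotheses of Theorem~\ref{thm:generalconv} and the Kingman limit follows immediately.
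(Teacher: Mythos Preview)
Your moment calculation rests on a mistaken conditional law. You include the offspring positions $X_{t-1}^{(1:N)}$ in $\mathcal{H}_t$ (as you must, to obtain the D-separation from $\mathcal{F}_{t-1}$), but once you condition on them the non-immortal parent draws are no longer $\operatorname{Categorical}(w_t^{(1:N)})$: by Bayes' rule,
\[
\Prob\bigl[a_t^{(i)}=k \mid \mathcal{H}_t\bigr] \;\propto\; w_t^{(k)}\, q_{t-1}\bigl(X_t^{(k)}, X_{t-1}^{(i)}\bigr),\qquad i\neq 1,
\]
so the family sizes are not $1_{\{i=1\}}+\operatorname{Multinomial}(N-1,w_t^{(1:N)})$ and the factorial-moment identities you invoke do not apply. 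This is exactly where the transition-density bounds in \eqref{eq:gq_bounds_csmc} enter: the paper uses them in a balls-in-bins coupling to sandwich each $\mathcal{H}_t$-conditional family size between two auxiliary variables $V_1^{(i)},V_2^{(i)}$ that are (shifted) binomials with success probabilities controlled by $a/\varepsilon$ and $\varepsilon/a$, and then bounds the second and third factorial moments of those. The D-separation step, by contrast, is purely graph-theoretic and identical to Appendix~\ref{app:dseparation}; it needs no assumption on $q_t$, and the fixed immortal trajectory introduces no new long-range links within a single conditional-SMC run. So you have misattributed the role of the $q_t$ hypothesis.

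You might hope to sidestep this by shrinking $\mathcal{H}_t$ to $\sigma(w_t^{(1:N)},X_t^{(1:N)})$, under which your multinomial description is correct. But then the path $a_t^{(1:N)}\to X_{t-1}^{(1:N)}\to w_{t-1}^{(1:N)}\to a_{t-1}^{(1:N)}\to\nu_{t-1}^{(1:N)}$ is unblocked, so $\nu_t^{(1:N)}\perp\mathcal{F}_{t-1}\mid\mathcal{H}_t$ fails and the tower-property reduction to \eqref{eq:mainthmcond} breaks down. The same gap propagates to your nontriviality argument: the uniform lower bound on the \emph{filtered} rate $\Et[c_N(t)]$ needed for conditional Borel--Cantelli also has to pass through the coupling (as in \eqref{eq:CSMC_cN_LB} and Lemma~\ref{thm:CSMC_nontriviality}), not through the raw weights.
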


Condition \eqref{eq:gq_bounds_csmc} is widespread in the SMC literature, where it is known as the \emph{strong mixing condition} \cite[Section 3.5.2]{delmoral2004}; it further strengthens the condition \eqref{eq:gq_bounds_sr} used in the case of stochastic rounding. It is to be expected that some additional control over transition densities is required in the case of multinomial resampling compared to stochastic rounding.

\begin{proof}
Define the conditioning $\sigma$-algebra $\mathcal{H}_t$ as in \eqref{eq:defn_Ht}.
We assume without loss of generality that the immortal particle takes index 1 in each generation. This significantly simplifies the notation, but the same argument holds if the immortal indices are taken to be $a_{(0:T)}^\star$ rather than $(1,\dots,1)$.

The parental indices are conditionally independent, as in standard SMC with multinomial resampling, but we have to treat $i=1$ as a special case. We have the following conditional law on parental indices
\begin{equation*}
\Prob \left[ a_t^{(i)} = a_i \mid \mathcal{H}_t \right] \propto
\begin{cases}
\1{a_i=1} &i=1 \\
w_t^{(a_i)} q_{t-1}(X_t^{(a_i)}, X_{t-1}^{(i)}) &i=2,\dots,N .
\end{cases}
\end{equation*}
The joint conditional law is therefore
\begin{equation*}
\Prob \left[ a_t^{(1:N)} = a_{1:N} \mid \mathcal{H}_t \right] \propto \1{a_1 = 1} \prod_{i=2}^N w_t^{(a_i)} q_{t-1}(X_t^{(a_i)}, X_{t-1}^{(i)}).
\end{equation*}
First we make the following observation, which follows from a balls-in-bins coupling.
Assume \eqref{eq:gq_bounds_csmc}. 
Then for any function $f:\{1,\dots,N\}^N \to \mathbb{R}$ such that (for a fixed $i$) $f(a_t^{\prime(1:N)}) \geq f(a_t^{(1:N)})$ whenever $|\{j:a_t^{\prime(j)}=i\}| \geq |\{j:a_t^{(j)}=i\}|$,
\begin{equation}\label{eq:csmc_f_bound}
\E[ f(A_{1,i}^{(1:N)}) ] 
\leq \E[ f(a_t^{(1:N)}) \mid \mathcal{H}_t ]
\leq \E[ f(A_{2,i}^{(1:N)}) ] 
\end{equation}
where the elements of $A_{1,i}^{(1:N)}, A_{2,i}^{(1:N)}$ are all mutually independent and independent of $\mathcal{F}_{\infty}$, and distributed according to
\begin{align*}
& A_{1,i}^{(j)} \sim \begin{cases}
\delta_1 \qquad & j=1 \\
\operatorname{Categorical}\left( (\varepsilon/a)^{\1{i=1} -\1{i\neq 1}} ,\dots, (\varepsilon/a)^{\1{i=N} -\1{i\neq N}} \right) & j\neq 1 
\end{cases} \\
& A_{2,i}^{(j)} \sim \begin{cases}
\delta_1 \qquad & j=1\\
\operatorname{Categorical}\left( (a/\varepsilon)^{\1{i=1} -\1{i\neq 1}} ,\dots, (a/\varepsilon)^{\1{i=N} -\1{i\neq N}} \right) & j\neq 1
 \end{cases}
\end{align*}
where the vector of probabilities is given up to a constant in the argument of Categorical distributions.
We use these random vectors to construct bounds that are independent of $\mathcal{F}_\infty$.
Also define the corresponding offspring counts $V_1^{(i)} = |\{j: A_{1,i}^{(j)}=i\}|$, $V_2^{(i)} = |\{j: A_{2,i}^{(j)}=i\}|$, for $i=1,\dots,N$, which have marginal distributions
\begin{align*}
& V_1^{(i)} \overset{d}{=} \1{i=1} + \operatorname{Binomial}\left(N-1, \frac{\varepsilon/a}{(\varepsilon/a) + (N-1)(a/\varepsilon)} \right) , \\
& V_2^{(i)} \overset{d}{=} \1{i=1} + \operatorname{Binomial}\left( N-1, \frac{a/\varepsilon}{(a/\varepsilon) + (N-1)(\varepsilon/a)} \right) .
\end{align*}
Now consider the function $f_i(a_t^{(1:N)}) := (\nu_t^{(i)})_2$. We can apply \eqref{eq:csmc_f_bound} to obtain the lower bound
\begin{align*}
\frac{1}{(N)_2} \sum_{i=1}^N \E[ (\nu_t^{(i)})_2 \mid \mathcal{H}_t ]
&\geq \frac{1}{(N)_2} \sum_{i=1}^N \E[ (V_1^{(i)})_2 ]
=  \frac{1}{(N)_2} \left[ \E[ (V_1^{(1)})_2 ] + \sum_{i=2}^N \E[ (V_1^{(i)})_2 ] \right] \\
&= \frac{1}{(N)_2} \Bigg[ \frac{(N-1)_2 (\varepsilon/a)^2}{\{(\varepsilon/a) + (N-1)(a/\varepsilon)\}^2} + \frac{2(N-1)(\varepsilon/a)}{(\varepsilon/a) + (N-1)(a/\varepsilon)}  \\
&\qquad\qquad\qquad + \sum_{i=2}^N \frac{(N-1)_2 (\varepsilon/a)^2}{\{(\varepsilon/a) + (N-1)(a/\varepsilon)\}^2} \Bigg] \\
&= \frac{1}{(N)_2} \left[ \frac{2(N-1)(\varepsilon/a)}{(\varepsilon/a) + (N-1)(a/\varepsilon)} + \sum_{i=1}^N \frac{(N-1)_2 (\varepsilon/a)^2}{\{(\varepsilon/a) + (N-1)(a/\varepsilon)\}^2} \right]
\end{align*}
using the moments of the Binomial distribution (see \cite{mosimann1962} for example) along with the identity $(X+1)_2 \equiv 2(X)_1 +(X)_2$.
This is further bounded by
\begin{align}
\frac{1}{(N)_2} \sum_{i=1}^N \E[ (\nu_t^{(i)})_2 \mid \mathcal{H}_t ]
&\geq \frac{1}{(N)_2} \left\{ \frac{2(N-1)(\varepsilon/a)}{N(a/\varepsilon)} + \frac{(N)_3 (\varepsilon/a)^2}{N^2(a/\varepsilon)^2} \right\} \nonumber\\
&= \frac{1}{N^2} \left\{\frac{2\varepsilon^2}{a^2} + \frac{(N-2)\varepsilon^4}{a^4}  \right\} . \label{eq:CSMC_cN_LB}
\end{align}
Similarly, we derive an upper bound on $f_i(a_t^{(1:N)}) := (\nu_t^{(i)})_3$, this time using the identity $(X+1)_3 \equiv 3(X)_2 +(X)_3 $:
\begin{align*}
\frac{1}{(N)_3} \sum_{i=1}^N \E[ (\nu_t^{(i)})_3 \mid \mathcal{H}_t]
&\leq \frac{1}{(N)_3} \left[ \E[ (V_2^{(1)})_3 ] + \sum_{i=2}^N \E[ (V_2^{(i)})_3 ] \right] \\
&\leq \frac{1}{(N)_3} \left[ \frac{ 3 (N-1)_2 (a/\varepsilon)^2}{\{(a/\varepsilon) + (N-1)(\varepsilon/a)\}^2} + \sum_{i=1}^N \frac{(N-1)_3 (a/\varepsilon)^3}{\{(a/\varepsilon) + (N-1)(\varepsilon/a)\}^3} \right] \\
&\leq \frac{1}{(N)_3} \left\{ \frac{3(N-1)_2 (a/\varepsilon)^2}{N^2 (\varepsilon/a)^2} + \frac{(N)_4 (a/\varepsilon)^3}{N^3 (\varepsilon/a)^3} \right\} \\
&= \frac{1}{(N)_3} \left\{ \frac{3(N-1)_2}{N^2} \frac{a^4}{\varepsilon^4} +\frac{(N)_4}{N^3} \frac{a^6}{\varepsilon^6} \right\} \\
&= \frac{1}{N^3} \left\{ \frac{3a^4}{\varepsilon^4} + \frac{(N-3) a^6}{\varepsilon^6} \right\} .
\end{align*}
We apply the tower property and conditional independence as in Corollary~\ref{thm:stochrounding}, upper bounding the ratio by
\begin{align*}
\frac{\frac{1}{(N)_3} \sum_{i=1}^N \Et[ (\nu_t^{(i)})_3 ]}{\frac{1}{(N)_2} \sum_{i=1}^N \Et[ (\nu_t^{(i)})_2 ]}
&\leq \frac{N^2}{N^3} \frac{ \frac{3a^4}{\varepsilon^4} + \frac{(N-3)a^6}{\varepsilon^6} }{ \frac{2\varepsilon^2}{a^2} + \frac{(N-2)\varepsilon^4}{a^4} }
\leq \frac{1}{N} \frac{a^6}{\varepsilon^6}\, \frac{3 + (N-3) a^2 / \varepsilon^2 }{2 + (N-2) \varepsilon^2 / a^2} \\
&\leq \frac{1}{N} \frac{a^6}{\varepsilon^6} \left\{ \frac{3}{2} + \frac{N-3}{N-2} \frac{a^4}{\varepsilon^4} \right\}
\leq \frac{1}{N} \left\{ \frac{3 a^6}{2 \varepsilon^6} + \frac{a^{10}}{\varepsilon^{10}} \right\}
=: b_N \underset{N\to\infty}{\longrightarrow} 0.
\end{align*}
Thus \eqref{eq:mainthmcond} is satisfied. It remains to show that, for $N$ sufficiently large, $\Prob[ \tau_N(t) = \infty ] =0$ for all finite $t$, a technicality which is proved in Lemma \ref{thm:CSMC_nontriviality} in Appendix~\ref{app:finiteness}. Applying Theorem \ref{thm:generalconv} gives the result.
\end{proof}

\section*{Acknowledgments}
This work was supported by The Engineering and Physical Sciences Council, The Medical Research Council, The Alan Turing Institute and the Alan Turing Institute --- Lloyd's Register Foundation Programme on Data-centric Engineering, under grant numbers EP/L016710/1, EP/R034710/1, EP/T004134/1, EP/N510129/1 and EP/R044732/1 .

\appendix
\section{Comparison of time scales for multinomial resampling and\\ stochastic rounding}\label{app:timescales}

Fix $N\geq1$. Suppose we are given a fixed vector of weights $w_t^{(1:N)}$. Let $\E^{M}$ and $\E^{SR}$ denote expectations with respect to the laws of resampling steps using multinomial resampling and stochastic rounding respectively.
Let $\nu_t^{(1:N)}$ be the resulting offspring counts. 

\begin{proposition}
For any $i \in \{1, \dots, N\}$,
$
\E^{M}[ c_N(t) \mid w_t^{(1:N)} ] \geq \E^{SR}[ c_N(t) \mid w_t^{(1:N)} ].
$
\end{proposition}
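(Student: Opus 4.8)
The plan is to reduce the claim to a per-particle comparison of marginal variances. Writing the conditional pair-merger rate as
\begin{equation*}
\E^{M}[c_N(t)\mid w_t^{(1:N)}] = \frac{1}{(N)_2}\sum_{i=1}^N \E^{M}[(\nu_t^{(i)})_2 \mid w_t^{(1:N)}],
\end{equation*}
and likewise for $\E^{SR}$, I note that $c_N(t)$ depends on the resampling law only through the marginal distributions of the individual family sizes $\nu_t^{(i)}$, so it suffices to compare these marginals under the two schemes. Since both multinomial resampling and any stochastic rounding satisfy the validity constraint $\E[\nu_t^{(i)}\mid w_t^{(1:N)}] = N w_t^{(i)}$, and since $(\nu)_2 = \nu(\nu-1)$ gives $\E[(\nu_t^{(i)})_2\mid w_t^{(1:N)}] = \operatorname{Var}(\nu_t^{(i)}\mid w_t^{(1:N)}) + (N w_t^{(i)})^2 - N w_t^{(i)}$, the first-moment terms cancel in the difference. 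The whole statement therefore reduces to establishing, for every $i$, the variance inequality $\operatorname{Var}^{M}(\nu_t^{(i)}\mid w_t^{(1:N)}) \geq \operatorname{Var}^{SR}(\nu_t^{(i)}\mid w_t^{(1:N)})$, since summing over $i$ and dividing by $(N)_2$ then yields the claim.

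Next I would identify the two marginal laws. Under multinomial resampling $\nu_t^{(i)}$ is $\mathrm{Binomial}(N, w_t^{(i)})$, so its variance is $N w_t^{(i)}(1 - w_t^{(i)})$. Under any stochastic rounding, Definition~\ref{defn:stochround} forces $\nu_t^{(i)}$ to be a translated Bernoulli variable supported on $\{\flnw, \flnw + 1\}$ with mean $N w_t^{(i)}$; writing $f_i := N w_t^{(i)} - \flnw$ for the fractional part, its variance is exactly $f_i(1 - f_i)$. The sum-to-$N$ constraint may correlate the $\nu_t^{(i)}$ across $i$ for a particular stochastic rounding, but this affects neither the individual marginals nor $\E[c_N(t)\mid w_t^{(1:N)}]$, so it can be ignored.

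The remaining and genuinely substantive step is the pointwise inequality $N w_t^{(i)}(1 - w_t^{(i)}) \geq f_i(1 - f_i)$. Setting $m := \flnw$ and $x := N w_t^{(i)} = m + f_i$, a short computation rewrites the difference of the two sides as
\begin{equation*}
N w_t^{(i)}(1 - w_t^{(i)}) - f_i(1-f_i) = \frac{1}{N}\Big[ m(N-m) + (N-1) f_i^2 - 2 m f_i \Big].
\end{equation*}
Viewed as a quadratic in $f_i$ with positive leading coefficient $N - 1$, the bracketed expression is minimized at $f_i = m/(N-1)$, where completing the square shows its value equals $mN(N - 1 - m)/(N-1)$. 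Since $0 \leq m \leq N-1$ (as $0 \leq N w_t^{(i)} \leq N$), this minimum is nonnegative, so the bracket is nonnegative for all $f_i \in [0,1)$ and the pointwise inequality holds. I expect this quadratic-minimization step to be the main obstacle: the inequality is false term-by-term without exploiting the integrality of $m$, and the decisive input is precisely that the floor $m$ is an integer bounded above by $N-1$.
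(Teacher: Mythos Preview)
Your proof is correct and follows essentially the same route as the paper: both reduce to the per-particle inequality on $\E[(\nu_t^{(i)})_2\mid w_t^{(1:N)}]$ and verify it by completing the square (the paper in the variable $w_t^{(i)}$, you in the fractional part $f_i$), with your variance reformulation making transparent why the first-moment contributions cancel. One minor caveat: your justification ``$m\leq N-1$ as $0\leq Nw_t^{(i)}\leq N$'' fails in the edge case $w_t^{(i)}=1$, where $m=N$; but then $f_i=0$ and the bracket equals $0$, so the conclusion is unaffected (the paper handles this case separately as well).
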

\begin{proof}
It is sufficient to show that $\E^{M}[ (\nu_t^{(i)})_2 \mid w_t^{(1:N)} ] \geq \E^{SR}[ (\nu_t^{(i)})_2 \mid w_t^{(1:N)} ]$ for each $i \in \{1, \dots, N\}$. 
Using properties of the Multinomial distribution \cite{mosimann1962}, we have
\begin{equation*}
\E^{M}[ (\nu_t^{(i)})_2 \mid w_t^{(1:N)} ]  = N(N-1) (w_t^{(i)})^2 .
\end{equation*}
Directly from Definition \ref{defn:stochround}, we calculate the corresponding quantity in the case of stochastic rounding to be
\begin{align*}
\E^{SR}[ (\nu_t^{(i)})_2 \mid w_t^{(1:N)} ] 
&= \flnw (\flnw -1) (1 - Nw_t^{(i)} + \flnw) \\
&\qquad\qquad + (\flnw +1) \flnw (Nw_t^{(i)} - \flnw) \\
&= \flnw \left( 2(Nw_t^{(i)} - \flnw) + \flnw -1 \right) .
\end{align*}
We define the difference $\Delta_i := \E^{M}[ (\nu_t^{(i)})_2 \mid w_t^{(i)} ] - \E^{SR}[ (\nu_t^{(i)})_2 \mid w_t^{(i)} ] $,
and show that $\Delta_i \geq 0$ for all $0 \leq w_t^{(i)} \leq 1$. Partition the interval $[0,1]$ into the half-open intervals $[k/N, (k+1)/N)$ for each $k\in \{0, \dots, N-1\}$, plus the singleton $\{1\}$.

If $w_t^{(i)}=1$, it follows easily that $\E^{M}[ (\nu_t^{(i)})_2 \mid w_t^{(1:N)} ] = \E^{SR}[ (\nu_t^{(i)})_2 \mid w_t^{(1:N)} ] = N(N-1) $. 
For the other cases, suppose $k/N \leq w_t^{(i)} < (k+1)/N $ for some $k \in \{0, \dots, N-1\}$. Then $\flnw = k$, and
\begin{align*}
\Delta_i &= (Nw_t^{(i)} - k)^2 - N (w_t^{(i)})^2 + k\\
&= N(N-1) \left\{ \left( w_t^{(i)} -\frac{k}{N-1} \right)^2 - \frac{k^2}{(N-1)^2} + \frac{k(k+1)}{N(N-1)} \right\}\\
&= N(N-1) \left( w_t^{(i)} -\frac{k}{N-1} \right)^2 + k\left(1-\frac{k}{N-1}\right) \\
&\geq k\left(1-\frac{k}{N-1}\right)
\geq 0 .
\end{align*}
For each $N\geq 2$, any $w_t^{(i)} \in [0,1]$ falls into one of the above cases, so for any fixed vector $w_t^{(1:N)}$ of weights, we have that $\Delta_i \geq 0$ for all $i$. For $N=1$ the result is immediate. This concludes the proof.
\end{proof}

\section{Proof of finite time-scale condition for corollaries}\label{app:finiteness}

\begin{lemma}\label{thm:SR_nontriviality}
Consider an SMC algorithm using any stochastic rounding as its resampling scheme.
Suppose that $\varepsilon \leq q_t(x, x^\prime) \leq \varepsilon^{-1}$ uniformly for some $\varepsilon \in (0,1]$, and that there exist $\zeta >0$ and $\delta \in (0,1)$ such that $\Prob[ \max_i w_t^{(i)} - \min_i w_t^{(i)} \geq 2\delta/N \mid \mathcal{F}_{t-1} ] \geq \zeta$ for infinitely many $t$. Then, for all $N>1$, $\Prob[ \tau_N(t) = \infty ] =0$ for all finite $t$.
\end{lemma}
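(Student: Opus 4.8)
The plan is to reduce the statement to the almost-sure divergence $\sum_{r \geq 1} c_N(r) = \infty$ and then to establish this divergence by a conditional (L\'evy) form of the second Borel--Cantelli lemma. Indeed, $\tau_N(t) = \infty$ holds exactly when $\sum_{r=1}^{\infty} c_N(r) < t$, so if the partial sums of $c_N$ diverge almost surely then $\Prob[\tau_N(t) = \infty] = 0$ simultaneously for every finite $t$. It therefore suffices to exhibit infinitely many times at which $c_N$ is bounded below by a fixed positive constant with uniformly positive conditional probability.

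To this end I would work with the merger event $M_t := \{ \exists\, i : \nu_t^{(i)} \geq 2 \}$, which lies in $\mathcal{F}_t$ and forces $c_N(t) \geq 2/(N)_2$ because $(\nu)_2 \geq 2$ whenever $\nu \geq 2$. The crux is the lower bound $\Prob[ M_t \mid \mathcal{H}_t ] \geq \delta\, \mathbbm{1}_{S_t}$, where $S_t := \{ \max_i w_t^{(i)} - \min_i w_t^{(i)} \geq 2\delta/N \}$ is $\mathcal{H}_t$-measurable and $\mathcal{H}_t$ is the $\sigma$-algebra in \eqref{eq:defn_Ht}. Since the weights average to $1/N$, on $S_t$ either some weight satisfies $N w_t^{(i)} \geq 1 + \delta$ or some weight satisfies $N w_t^{(i)} \leq 1 - \delta$. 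In the former case that particle has $\nu_t^{(i)} \geq 2$ with conditional probability at least $\delta$ directly from Definition \ref{defn:stochround}, its floor being at least $1$ and its fractional part at least $\delta$. In the latter case the same particle has $\nu_t^{(i)} = 0$ with conditional probability at least $\delta$, and whenever a particle dies the constraint $\sum_j \nu_t^{(j)} = N$ forces, by the pigeonhole principle, at least one of the remaining $N-1$ particles to receive two or more offspring; in both cases $M_t$ occurs with conditional probability at least $\delta$.

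I would then transfer this estimate to the filtration $(\mathcal{F}_t)$. Using the conditional independence $\nu_t^{(1:N)} \perp \mathcal{F}_{t-1} \mid \mathcal{H}_t$ furnished by the D-separation argument of Corollary \ref{thm:stochrounding}, so that $\Prob[M_t \mid \mathcal{H}_t, \mathcal{F}_{t-1}] = \Prob[M_t \mid \mathcal{H}_t]$, the tower property gives
\begin{equation*}
\Prob[ M_t \mid \mathcal{F}_{t-1} ] = \E\big[\, \Prob[ M_t \mid \mathcal{H}_t ] \,\big|\, \mathcal{F}_{t-1} \big] \geq \delta\, \Prob[ S_t \mid \mathcal{F}_{t-1} ] \geq \delta \zeta
\end{equation*}
for each of the infinitely many $t$ at which the hypothesis $\Prob[ S_t \mid \mathcal{F}_{t-1} ] \geq \zeta$ holds. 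Hence $\sum_t \Prob[ M_t \mid \mathcal{F}_{t-1} ] = \infty$ almost surely, and since $M_t \in \mathcal{F}_t$, L\'evy's conditional Borel--Cantelli lemma yields that $M_t$ occurs for infinitely many $t$ almost surely. This gives $c_N(t) \geq 2/(N)_2$ infinitely often, so $\sum_r c_N(r) = \infty$ almost surely, which is what was needed.

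The step I expect to require the most care is the nested conditioning across the reversed time direction: one must check that $M_t$ is $\mathcal{F}_t$-measurable, that $S_t$ is $\mathcal{H}_t$-measurable, and that the conditional-independence relation is applied so that the tower property correctly collapses $\Prob[M_t \mid \mathcal{H}_t, \mathcal{F}_{t-1}]$ to $\Prob[M_t \mid \mathcal{H}_t]$ before the hypothesis on $S_t$ is invoked. By comparison the two elementary cases, large weight versus death-plus-pigeonhole, are routine once the threshold $2\delta/N$ is split about the mean weight $1/N$.
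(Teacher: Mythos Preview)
Your overall architecture is right and coincides with the paper's: reduce to almost-sure divergence of $\sum_r c_N(r)$, bound the conditional probability of a merger from below on the event $S_t$, pass to $\mathcal{F}_{t-1}$ via the tower property and the D-separation $\nu_t^{(1:N)} \perp \mathcal{F}_{t-1} \mid \mathcal{H}_t$, and close with L\'evy's conditional Borel--Cantelli lemma. The large-weight/small-weight dichotomy and the pigeonhole observation are also exactly what the paper uses.

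The gap is in the claimed bound $\Prob[M_t \mid \mathcal{H}_t] \geq \delta\,\mathbbm{1}_{S_t}$. You justify it ``directly from Definition~\ref{defn:stochround}'', but that definition gives the marginals of $\nu_t^{(i)}$ conditional on the weight vector $w_t^{(1:N)}$ alone. The separating $\sigma$-algebra $\mathcal{H}_t$ is strictly larger: it contains the offspring positions $X_{t-1}^{(1:N)}$, which were produced by first drawing $a_t^{(1:N)}$ and then propagating through the kernel. Conditioning additionally on $X_{t-1}^{(1:N)}$ therefore re-weights the law of $a_t^{(1:N)}$ by the likelihood factor $\prod_k q_{t-1}(X_t^{(a_k)}, X_{t-1}^{(k)})$, so the probabilities $p_0^{(i)}, p_1^{(i)}$ under $\mathcal{H}_t$ are \emph{not} the raw stochastic-rounding probabilities. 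This is precisely where the hypothesis $\varepsilon \leq q_t \leq \varepsilon^{-1}$ enters, and your argument never uses it; without it the conditional probability of $\{\nu_t^{(i)} = \lfloor N w_t^{(i)} \rfloor + 1\}$ given $\mathcal{H}_t$ need not be bounded away from zero. The paper carries this computation out and obtains $p_1^{(i)} \geq (N w_t^{(i)} - \lfloor N w_t^{(i)} \rfloor)\,\varepsilon^{2N}$, yielding $\Prob[M_t \mid \mathcal{H}_t] \geq \delta\,\varepsilon^{2N}\,\mathbbm{1}_{S_t}$; with this correction your Borel--Cantelli step goes through with $\delta\zeta$ replaced by $\delta\zeta\,\varepsilon^{2N}$. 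Note also that you cannot sidestep the issue by conditioning on $w_t^{(1:N)}$ instead of $\mathcal{H}_t$: the D-separation you rely on fails for that smaller $\sigma$-algebra, since the path $\nu_t \leftarrow a_t \to X_{t-1} \to w_{t-1} \to a_{t-1} \to \nu_{t-1}$ is not blocked. So the ``nested conditioning'' you flag as delicate is indeed the crux, but the subtlety is not the measurability bookkeeping; it is that the very act of conditioning on $\mathcal{H}_t$ distorts the stochastic-rounding marginals.
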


\begin{proof}
Let $\mathcal{H}_t$ be defined as in \eqref{eq:defn_Ht}. The first step is to show that whenever $\max_i w_t^{(i)} \geq (1+\delta)/N$, $\Prob[  c_N(t) > 2/N^2 | \mathcal{H}_t ] = \Prob[ c_N(t) \neq 0 | \mathcal{H}_t ]$ is bounded below uniformly in $t$.
For this purpose we need consider only weight vectors such that $w_t^{(i)} \in (0,2/N)$ for all $i$; otherwise $\Prob[ c_N(t) \neq 0 | \mathcal{H}_t ] =1$ by the definition of stochastic rounding.

Denote $\mathcal{S}_{N-1}^\delta = \{ w^{(1:N)} \in \mathcal{S}_{N-1} :  \forall i, \, 0 <w^{(i)} <2/N ;\, \max_i w^{(i)} \geq (1 + \delta)/N \}$ for any $\delta \in (0, 1)$, where $\mathcal{S}_{k}$ denotes the $k$-dimensional probability simplex.
Fix arbitrary $w_t^{(1:N)} \in \mathcal{S}_{N-1}^\delta$. Set $i^\star = \arg\max_i w_t^{(i)}$ and denote $\mathcal{I} = \{i \in \{1,\dots,N\} : w^{(i)} > 1/N \}$.
Since all weights are in $(0, 2/N)$, for $i \in \mathcal{I}, \nu_t^{(i)} \in \{1,2\}$ and for $i \notin \mathcal{I}, \nu_t^{(i)} \in \{0,1\}$; and since the offspring counts must sum to $N$, we can write
\begin{align}\label{eq:smallcN_istar}
\Prob[ c_N(t) \leq 2/N^2 | \mathcal{H}_t ]
&= \Prob[ \nu_t^{(i)} =1 \,\forall i\in\{1,\dots,N\} | \mathcal{H}_t ] \notag\\
&= \Prob[ \nu_t^{(i)} =1 \,\forall i\in \mathcal{I} | \mathcal{H}_t ] \notag\\
&= \prod_{i \in \mathcal{I}} \Prob[ \nu_t^{(i)} =1 | \nu_t^{(j)}=1 \,\forall j \in \mathcal{I}: j<i; \mathcal{H}_t ] \notag\\
&= \Prob[ \nu_t^{(i^\star)} =1 | \mathcal{H}_t ] \prod_{\substack{i \in \mathcal{I} \\ i \neq i^\star}} \Prob[ \nu_t^{(i)} =1 | \nu_t^{(i^\star)}=1; \nu_t^{(j)}=1 \,\forall j \in \mathcal{I}: j<i ; \mathcal{H}_t ] \notag\\
&\leq \Prob[ \nu_t^{(i^\star)} =1 | \mathcal{H}_t ] .
\end{align}
The final inequality holds with equality when $|\mathcal{I}| =1$, i.e.\ the only weight larger than $1/N$ is $w_t^{(i^\star)}$.
Thus $\Prob[ c_N(t) > 2/N^2 | \mathcal{H}_t ]$ is minimised on $\mathcal{S}_{N-1}^\delta$ when only one weight is larger than $1/N$, in which case the values of the other weights do not affect this probability. 

Define $w_{\delta^\prime} = \{(1,\dots,1) + \delta^\prime e_{i^\star} - \delta^\prime e_{j^\star} \} /N$ for fixed $i^\star \neq j^\star$ and $\delta^\prime \in (0,1)$, where $e_i$ denotes the $i$th canonical basis vector in $\mathbb{R}^N$. 
As in the proof of Corollary~\ref{thm:stochrounding}, define $p_0^{(i)} = \Prob[ \nu_t^{(i)} = \flnw \mid \mathcal{H}_t ]$ and $p_1^{(i)} = \Prob[ \nu_t^{(i)} = \flnw +1 \mid \mathcal{H}_t ]$. Then from \eqref{eq:smallcN_istar} we have
\begin{equation*}
\Prob[ c_N(t) > 2/N^2 \mid \mathcal{H}_t, w_t^{(1:N)} = w_{\delta^\prime} ]
= 1- \Prob[ \nu_t^{(i^\star)} = 1 \mid \mathcal{H}_t, w_t^{(1:N)} = w_{\delta^\prime} ]
= p_1^{(i^\star)},
\end{equation*}
evaluated on $w_{\delta^\prime}$.
We will need a lower bound on $p_1^{(i^\star)}$ when $w_t^{(1:N)} = w_{\delta^\prime}$. 
We first derive expressions for $p_0^{(i)}$ and $p_1^{(i)}$ up to a constant, then use $p_0^{(i)} + p_1^{(i)} =1$ to get a normalised bound. We have
\begin{align*} 
p_0^{(i)} &= C (1- N w_t^{(i)} + \flnw) \\
&\qquad \times \sum_{\substack{a_{1:N} \in \{1,\dots,N\}^N : \\ |\{j: a_j=i\}|=\flnw }}
\Prob\left[ a_t^{(1:N)} = a_{1:N} \mid \nu_t^{(i)}, w_t^{(1:N)} \right]
\prod_{k=1}^N q_{t-1}( X_t^{(a_k)}, X_{t-1}^{(k)} ) ,\\
p_1^{(i)} &= C (N w_t^{(i)} - \flnw) \\
&\qquad \times \sum_{\substack{a_{1:N} \in \{1,\dots,N\}^N : \\ |\{j: a_j=i\}|=\flnw +1 }}
\Prob\left[ a_t^{(1:N)} = a_{1:N} \mid \nu_t^{(i)}, w_t^{(1:N)} \right]
\prod_{k=1}^N q_{t-1}( X_t^{(a_k)}, X_{t-1}^{(k)} ) .
\end{align*}
Applying the bounds on $q_t$, we have
\begin{align*}
C (1- N w_t^{(i)} + \flnw) \varepsilon^N &\leq p_0^{(i)} \leq C (1- N w_t^{(i)} + \flnw) \varepsilon^{-N} ,\\
C (N w_t^{(i)} - \flnw) \varepsilon^N &\leq p_1^{(i)} \leq C (N w_t^{(i)} - \flnw) \varepsilon^{-N}
\end{align*}
from which we construct the normalised bound
\begin{equation*}
p_1^{(i)} \geq \frac{ (Nw_t^{(i)} - \flnw) \varepsilon^{N} }{ (Nw_t^{(i)} - \flnw) \varepsilon^{-N} + (1- Nw_t^{(i)} +\flnw) \varepsilon^{-N}}
= (Nw_t^{(i)} - \flnw) \varepsilon^{2N} .
\end{equation*}
When $w_t^{(1:N)} = w_{\delta^\prime}$, we have $w_t^{(i^\star)} = (1+\delta^\prime)/N$, so $p_1^{(i^\star)} \geq \delta^\prime \varepsilon^{2N}$,
which is increasing in $\delta^\prime$.
We conclude that $\Prob[ c_N(t) > 2/N^2 | \mathcal{H}_t, \max_i w_t^{(i)} \geq (1+\delta)/N ] \geq \min_{\delta^\prime \geq \delta} \delta^\prime \varepsilon^{2N} = \delta \varepsilon^{2N}$.

A slight modification of this argument yields $\Prob[ c_N(t) > 2/N^2 | \mathcal{H}_t, \min_i w_t^{(i)} \leq (1-\delta)/N ] \geq \delta \varepsilon^{2N} $.
Whenever $\max_i w_t^{(i)} - \min_i w_t^{(i)} \geq 2\delta/N$, either $\max_i w_t^{(i)} \geq (1+\delta)/N$ or $\min_i w_t^{(i)} \leq (1-\delta)/N$, so we have 
$\Prob[ c_N(t) > 2/N^2 | \mathcal{H}_t, \max_i w_t^{(i)} - \min_i w_t^{(i)} \geq 2\delta/N ] \geq \delta \varepsilon^{2N}$.
Thus 
\begin{equation*}
\Prob[ c_N(t)>2/N^2 \mid \mathcal{H}_t ] \geq \delta \varepsilon^{2N}\1{\max_i w_t^{(i)} - \min_i w_t^{(i)} \geq 2\delta/N} .
\end{equation*}
Using the D-separation established in Appendix \ref{app:dseparation} combined with the tower property, we have
\begin{align*}
\Prob[ c_N(t)>2/N^2 \mid \mathcal{F}_{t-1} ]
&=\Et\left[ \Prob[ c_N(t)>2/N^2 \mid \mathcal{H}_t, \mathcal{F}_{t-1} ] \right]
=\Et\left[ \Prob[ c_N(t)>2/N^2 \mid \mathcal{H}_t ] \right] \\
&\geq \delta \varepsilon^{2N} \Prob[ \max_i w_t^{(i)} - \min_i w_t^{(i)} \geq 2\delta/N \mid \mathcal{F}_{t-1} ] ,
\end{align*}
which is bounded below by $ \zeta \delta \varepsilon^{2N} $ for infinitely many $t$. 
Hence,
\begin{equation*}
\sum_{t=0}^\infty \Prob[ c_N(t) > 2/N^2 \mid \mathcal{F}_{t-1} ] = \infty .
\end{equation*}
By a filtered version of the second Borel--Cantelli lemma (see for example \cite[Theorem 4.3.4]{durrett2019}), this implies that $c_N(t) >2/N^2$ for infinitely many $t$, almost surely.
This ensures, for all $t <\infty$, that $\Prob\left[ \exists s<\infty : \sum_{r=1}^s c_N(r) \geq t \right] =1$, which by definition of $\tau_N(t)$ is equivalent to $\Prob[ \tau_N(t) = \infty ] =0$.
\end{proof}

\begin{lemma}\label{thm:CSMC_nontriviality}
Consider a conditional SMC algorithm using multinomial resampling, satisfying the standing assumption and \eqref{eq:gq_bounds_csmc}. 
Then, for all $N>2$, $\Prob[ \tau_N(t) = \infty ]=0$ for all finite $t$.
\end{lemma}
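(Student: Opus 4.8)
The plan is to follow the structure of the proof of Lemma~\ref{thm:SR_nontriviality}: establish a lower bound, uniform in $t$, on the conditional probability that at least one coalescence occurs in a single resampling step, and then invoke the filtered second Borel--Cantelli lemma. The key simplification relative to the stochastic-rounding case is that multinomial resampling places strictly positive mass on collision events no matter how flat the weight vector is, so no analogue of the weight-spread hypothesis $\max_i w_t^{(i)} - \min_i w_t^{(i)} \geq 2\delta/N$ is required; the bound can be made to hold for \emph{every} $t$ rather than infinitely many.

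First I would condition on the $\sigma$-algebra $\mathcal{H}_t$ of \eqref{eq:defn_Ht} and use the D-separation of Appendix~\ref{app:dseparation} to reduce everything to quantities measurable with respect to $\mathcal{H}_t$. Taking the immortal index to be $1$ as in the proof of Corollary~\ref{thm:CSMC_newassns}, the non-immortal offspring $2,\dots,N$ choose their parents independently with conditional law proportional to $w_t^{(a_i)} q_{t-1}(X_t^{(a_i)}, X_{t-1}^{(i)})$. Under the strong mixing condition \eqref{eq:gq_bounds_csmc}, the weights lie in $[1/(a^2N), a^2/N]$ and the transition densities are bounded above and below, so the unnormalised parental weights for different candidate parents are mutually comparable up to a constant depending only on $a$ and $\varepsilon$. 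Bounding the normalising constant from above then gives $\Prob[ a_t^{(i)} = \ell \mid \mathcal{H}_t ] \geq c/N$ for a constant $c = c(a,\varepsilon) > 0$ independent of $t$ and of the realised particle positions; this is exactly the content of the balls-in-bins bounds already exploited in Corollary~\ref{thm:CSMC_newassns}.

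Since $N > 2$, there are at least two non-immortal offspring, say indices $2$ and $3$, and the event that they select the same parent forces some family size to be at least $2$, hence $c_N(t) \geq 2/(N)_2 > 0$. Summing the product of the two independent parent-choice laws over the $N$ possible common parents yields
\[
\Prob\!\left[ c_N(t) \geq \tfrac{2}{(N)_2} \,\middle|\, \mathcal{H}_t \right] \geq \sum_{\ell=1}^N \Prob[a_t^{(2)} = \ell \mid \mathcal{H}_t]\,\Prob[a_t^{(3)} = \ell \mid \mathcal{H}_t] \geq N\left(\frac{c}{N}\right)^{2} = \frac{c^{2}}{N} =: \rho_N > 0 ,
\]
uniformly in $t$. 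Passing from $\mathcal{H}_t$ to $\mathcal{F}_{t-1}$ via the tower property and D-separation, exactly as at the end of the proof of Lemma~\ref{thm:SR_nontriviality}, preserves this lower bound, so $\Prob[ c_N(t) \geq 2/(N)_2 \mid \mathcal{F}_{t-1} ] \geq \rho_N$ for all $t$.

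Finally, because $\rho_N$ is a strictly positive constant for fixed $N$ and there are infinitely many time steps, $\sum_{t} \Prob[ c_N(t) \geq 2/(N)_2 \mid \mathcal{F}_{t-1} ] = \infty$. The filtered second Borel--Cantelli lemma (\cite[Theorem 4.3.4]{durrett2019}) then gives $c_N(t) \geq 2/(N)_2$ for infinitely many $t$ almost surely, whence $\sum_{r=1}^{s} c_N(r) \to \infty$ and $\tau_N(t) < \infty$ for every finite $t$, i.e.\ $\Prob[\tau_N(t) = \infty] = 0$. The only genuinely delicate step is the uniform lower bound $\Prob[ a_t^{(i)} = \ell \mid \mathcal{H}_t ] \geq c/N$: one must confirm that the normalising constant of the conditional parental law stays bounded above uniformly in $t$ and in the particle positions, which follows from the two-sided bounds in \eqref{eq:gq_bounds_csmc} but should be verified rather than assumed. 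Everything else parallels Lemma~\ref{thm:SR_nontriviality}.
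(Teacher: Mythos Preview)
Your argument is correct. The uniform lower bound $\Prob[a_t^{(i)} = \ell \mid \mathcal{H}_t] \geq c/N$ does follow from \eqref{eq:gq_bounds_csmc}, with $c = \varepsilon^2/a^4$: the factor $h(X_{t-1}^{(i)})$ from the transition-density bound appears in both numerator and denominator of the conditional probability and cancels, so it is the \emph{ratio} that is uniformly bounded below, not the normalising constant that is uniformly bounded above as you phrase it. With that minor correction, the conditional independence of the non-immortal parent choices, the collision bound for indices $2$ and $3$, the passage to $\mathcal{F}_{t-1}$ via D-separation and the tower property, and the filtered Borel--Cantelli conclusion are all sound.

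The paper takes a different and slightly slicker route. Rather than building a direct collision bound, it uses the elementary fact that any $[0,1]$-valued random variable satisfies $\Prob[X>0]\geq \E[X]$ (equivalently, the extremal law for fixed mean is two atoms at $0$ and $1$). Applying this with $X = c_N(t)$ conditionally on $\mathcal{F}_{t-1}$, and noting that $c_N(t)>0$ forces $c_N(t)\geq 2/(N)_2$, gives $\Prob[c_N(t)>2/N^2\mid\mathcal{F}_{t-1}]\geq \Et[c_N(t)]$. The required uniform positive lower bound on $\Et[c_N(t)]$ is then read off directly from \eqref{eq:CSMC_cN_LB}, already established in the proof of Corollary~\ref{thm:CSMC_newassns}. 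The paper's version thus recycles an existing estimate and avoids any fresh computation with $\mathcal{H}_t$; your version is more self-contained, produces an explicit constant $c^2/N$, and does not rely on the earlier corollary. Both feed into the same Borel--Cantelli endgame.
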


\begin{proof}
Since $c_N(t) \in [0,1]$ almost surely and has strictly positive expectation, for any fixed $N$ the distribution of $c_N(t)$ with given expectation that maximises $\Prob[ c_N(t)=0 \mid \mathcal{F}_{t-1} ]$ is two atoms, at 0 and 1 respectively. To ensure the correct expectation, the atom at 1 should have mass $\Prob[ c_N(t)=1 \mid \mathcal{F}_{t-1} ] = \Et [ c_N(t) ]$, which is bounded below by \eqref{eq:CSMC_cN_LB}.
If $c_N(t) > 0$ then $c_N(t) \geq 2/(N)_2 > 2/N^2$. Hence, in general $\Prob[ c_N(t) > 2/N^2 \mid \mathcal{F}_{t-1} ] \geq \Et [c_N(t)]$. Applying \eqref{eq:CSMC_cN_LB}, we have for any finite $N$,
\begin{equation*}
\sum_{t=0}^\infty \Prob[ c_N(t) > 2/N^2 \mid \mathcal{F}_{t-1} ]
\geq \sum_{t=0}^\infty \Et [ c_N(t) ]
\geq \sum_{t=0}^\infty \frac{1}{N^2} \left\{\frac{2\varepsilon^2}{a^2} + \frac{(N-2)\varepsilon^4}{a^4}  \right\}
= \infty
\end{equation*}
By an argument analogous to the conclusion of Lemma \ref{thm:SR_nontriviality}, $\Prob[ \tau_N(t) = \infty ] =0$ for all $t < \infty$.
\end{proof}

\section{D-separation argument to establish conditional independence of $a_t^{(1:N)}$ and $\mathcal{F}_{t-1}$ given $\mathcal{H}_t$}\label{app:dseparation}
Figure \ref{fig:cond_indep_graph} shows part of the conditional dependence graph implied by Algorithm \ref{alg:SMC}. Our aim is to find a $\sigma$-algebra $\mathcal{H}_t$ at each time $t$ that separates the ancestral process (encoded by $a_t^{(1:N)}$) from the filtration $\mathcal{F}_{t-1}$. That is, $a_t^{(1:N)}$ is conditionally independent of $\mathcal{F}_{t-1}$ given $\mathcal{H}_t$.
By a D-separation argument (see \cite{Verma1988}), the nodes highlighted in grey suffice as the generator of $\mathcal{H}_t$. That is, for each $t$, we take
\begin{equation*}
\mathcal{H}_t = \sigma(X_{t-1}^{(1:N)}, X_t^{(1:N)}, w_{t-1}^{(1:N)}, w_t^{(1:N)} ).
\end{equation*}
Notice that $\nu_t^{(1:N)}$ can be expressed as a non-injective function of $a_t^{(1:N)}$, and as such carries less information.
\begin{figure}[ht]
\centering
\begin{tikzpicture}[>=stealth]
\filldraw[gray!20, rounded corners] (3.2,0.5)--(8.6,0.5)--(8.6,-2.5)--(3.2,-2.5)--cycle;
\node[gray!70] at (8.9,0.3) {$\mathcal{H}_t$};
\node at (-2,0) {...};
\node at (-2,-2) {...};
\node at (-2,-4) {...};
\node at (-2,-6) {...};
\node at (0,0) {$X_{t+1}^{(1:N)}$};
\node at (0,-2) {$w_{t+1}^{(1:N)}$};
\node at (0,-4) {$a_{t+1}^{(1:N)}$};
\node at (0,-6) {$\nu_{t+1}^{(1:N)}$};
\node at (4,0) {$X_{t}^{(1:N)}$};
\node at (4,-2) {$w_{t}^{(1:N)}$};
\node at (4,-4) {$a_{t}^{(1:N)}$};
\node at (4,-6) {$\nu_{t}^{(1:N)}$};
\node at (8,0) {$X_{t-1}^{(1:N)}$};
\node at (8,-2) {$w_{t-1}^{(1:N)}$};
\node at (8,-4) {$a_{t-1}^{(1:N)}$};
\node at (8,-6) {$\nu_{t-1}^{(1:N)}$};
\node at (10,0) {...};
\node at (10,-2) {...};
\node at (10,-4) {...};
\node at (10,-6) {...};
\draw [rounded corners, dashed, gray] (11,-6.6)--(7,-6.6)--(7,-5.5)--(11,-5.5);
\draw [rounded corners, dashed, gray] (11,-6.8)--(3,-6.8)--(3,-5.3)--(11,-5.3);
\draw [rounded corners, dashed, gray] (11,-7)--(-1,-7)--(-1,-5.1)--(11,-5.1);
\node[gray] at (7.5,-6.4) {\footnotesize{$\mathcal{F}_{t-1}$}};
\node[gray] at (3.3,-6.6) {\footnotesize{$\mathcal{F}_{t}$}};
\node[gray] at (-0.5,-6.8) {\footnotesize{$\mathcal{F}_{t+1}$}};
\draw[->] (0.5,0)--(3.4,0);
\draw[->] (0.5,0)--(3.4,-2);
\draw[->] (0.5,-4)--(3.4,-2.1);
\draw[->] (0.5,-4)--(3.4,-0.1);
\draw[->] (4.5,0)--(7.4,0);
\draw[->] (4.5,0)--(7.4,-2);
\draw[->] (4.5,-4)--(7.4,-2.1);
\draw[->] (4.5,-4)--(7.4,-0.1);
\draw[->] (0,-0.3)--(0,-1.7);
\draw[->] (0,-2.3)--(0,-3.7);
\draw[->] (0,-4.3)--(0,-5.7);
\draw[->] (4,-0.3)--(4,-1.7);
\draw[->] (4,-2.3)--(4,-3.7);
\draw[->] (4,-4.3)--(4,-5.7);
\draw[->] (8,-0.3)--(8,-1.7);
\draw[->] (8,-2.3)--(8,-3.7);
\draw[->] (8,-4.3)--(8,-5.7);
\end{tikzpicture}
\caption{Part of the conditional dependence graph implied by Algorithm \ref{alg:SMC}. The direction of time is from left to right. The reverse-time filtration is indicated by the dashed areas. The nodes highlighted in grey generate the separatrix $\mathcal{H}_t$ between $a_t^{(1:N)}$ and $\mathcal{F}_{t-1}$.}
\label{fig:cond_indep_graph}
\end{figure}
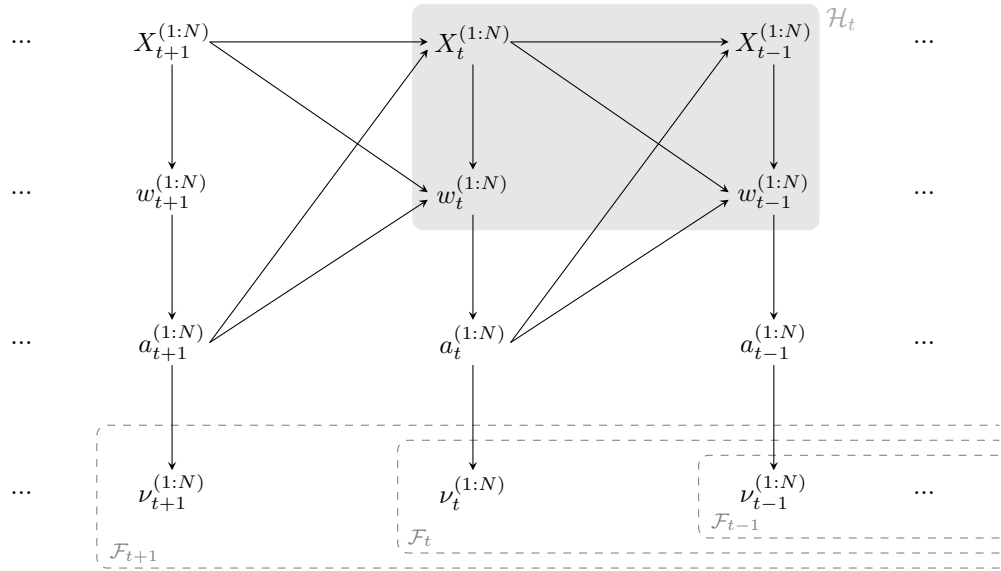

\bibliography{../smc.bib}
\end{document}